\newcommand{\stkout}[1]{\ifmmode\text{\sout{\ensuremath{#1}}}\else\sout{#1}\fi}
\theoremstyle{plain}
\DeclarePairedDelimiter\ceil{\lceil}{\rceil}
\DeclarePairedDelimiter\floor{\lfloor}{\rfloor}
\newtheorem{lemma}{Lemma}
\newtheorem{theorem}{Theorem}
\newtheorem{problem}{Problem}
\theoremstyle{plain}
\definecolor{darkgreen}{rgb}{0,0.6,0}
\newcommand{\kibitz}[2]{\ifnum\Comments=1{\color{#1}{#2}}\fi}
\title{The K-Centre Problem for Necklaces}
\author{D. Adamson, A. Deligkas, V. Gusev, I. Potapov}
\author{Duncan Adamson, Argyrios Deligkas, Vladimir V. Gusev, Igor Potapov}
\date{\today}
\begin{document}


\maketitle

\begin{abstract}
In graph theory, the objective of the $k$-centre problem is to find a set of $k$ vertices for which the largest distance of any vertex to its closest vertex in the $k$-set is minimised. 
In this paper, we introduce the $k$-centre problem for sets of necklaces, i.e. the equivalence classes of words under the cyclic shift.
This can be seen as   
the $k$-centre problem on the complete weighted graph where every necklace is represented by a vertex, and each edge has a weight given by the overlap distance between any pair of necklaces.
Similar to the graph case, the goal is to choose $k$ necklaces such that the distance from any word in the language and its nearest centre is minimised. 
However, in a case of \emph{$k$-centre problem for languages} the size of associated graph maybe exponential in relation to the description of the language, i.e., the length of the words  $\ell$ and the size of the alphabet $q$.
We derive several approximation algorithms for the $k$-centre problem on necklaces,
with logarithmic approximation 
factor in the context of $\ell$ and $k$, and within a constant factor for a more restricted case.
\end{abstract}









\section{Introduction}
\label{sec:introduction}
In graph theory, the objective in $k$-centre problem is to find a set of $k$ vertices for which the largest distance of any vertex of the graph and its closest vertex in this $k$-set is minimised. 
The numerous applications of the problem in various areas of computer science, lead to different definitions of connectivity and distance between the vertices depending on the application at hand.
The $k$-centre problem on graphs is known to be NP-hard. 
The best performance ratio for a polynomial-time approximation solution is $2$ unless P = NP, but it is  unlikely to be fixed-parameter tractable (FPT) in a context of the most natural parameter $k$, which is the number of centres \cite{Algorithmica2020}.

A different form of the $k$-centre problem
appears in stringology and it was linked with important applications in computational biology for example to find the approximate gene clusters for a set of words over the DNA alphabet \cite{JACM2002}. This problem is also NP-hard problem; there are fixed-parameter algorithms and heuristic algorithms for it without any performance guarantee. The closest string problem aims to find a new string within a distance $d$ to each input of $n$ strings and such that $d$ is minimized.
The natural generalization of $k$-Closest String problem is of finding $k$-centre strings of a given length minimizing the distance from every string to closest
centre \cite{SODA99,CPM2004}. 
This problem has been mainly studied for the most popular distance measure which is the Hamming distance. 
The major application of this distance is in the coding theory, but it also has been intensively used in many biological applications which aim to discover 
a region of similarity or to
design probes or primers \cite{IC2003}. 

In this paper we define and study a new variant of the $k$-centre problem 
on the new objects, the class of {\sl necklaces}, and using a different distance function, {\sl the overlap coefficient} to define the closeness of strings or necklaces. 
{\sl Necklaces}  are  classical structures in combinatorics, which can be defined as a set of $\ell$-character strings over an alphabet of size $q$, that are equivalent under the cyclic shift operation.
The research on necklaces in combinatorics has been mainly focused on characterisation of these objects, their efficient generation and comparison, ordering as well as on the design of an  efficient ranking and unranking procedures \cite{Kopparty2016,Sawada2017}.
In this paper, we 
link several problems and interconnect research ideas on these fascinating objects as well as design approximation algorithms for the $k$-centres problem on necklaces, which have applications for combinatorial crystal structure prediction
\cite{SOFSEM2020,CSP2020}.

In particular we are motivated to study {\sl $k$-centre problem} for a class of necklaces 
by the {\sl Extended Module Materials Assembly (EMMA)} method used for in silico predictions of novel materials~\cite{ Collins17,Dyer13}. 
The idea of this method is to consider materials assembled from well-chosen layers, where the arrangements with low enough energy constitute potentially stable materials. Since the space of all potential stackings is typically too large for exhaustive search, one is looking for a diverse and representative sample of this space to be used with further optimisation strategies. To approach this problem, we model layers as letters and materials (periodic crystals) as necklaces due to their invariance under the cyclic shift operation. The set of necklaces during the optimisation is often further constrained: fixed chemical composition corresponds to necklaces with the fixed Parikh vector and constraints on the relative position of layers lead to necklaces with forbidden subwords.
These considerations lead us to the general problem of sampling from languages: 
$\mathcal{L}^{\Sigma}_{\ell}$  -- the set of all words of length $\ell$;
$\mathcal{L}^{\leq\Sigma}_{\ell}$ -- the set of all words of length at most $\ell$; $\mathcal{L}^{\Sigma}_{\mathbf{P}}$ -- the set of all words with the Parikh vector $\mathbf{P}$  and $\mathcal{L}^{\Sigma\setminus F}_{\ell}$ --
the set of all words of length at most $\ell$ that do not contain words from a finite set $\mathcal{F}$ as factors. 
%
Apart from the Hamming distance, there are several well known methods
for comparing words with their own advantages and disadvantages \cite{cohen2003comparison, piskorski2007comparison, recchia2013comparison}.
In order to define the closeness between different necklaces we use
one of such methods based on computing the overlap coefficient between each pair of necklaces. In particular, in the context of the material science,  two patterns of layers may have closer properties if they have more common fragments.

In general the $k$-centre problem on necklaces can be seen as 
the $k$-centre problem on the complete weighted graph where every word is represented by a node, and each edge has a weight given by the overlap distance between the two words.
As in the graph case, the goal  is to choose $k$ words such that the distance from any word in the language and its nearest centre is minimised. 
However, in a case of  \emph{$k$-centre problem for languages} the
associated graph can be of exponential size in the context of the input.
%

The main results of this paper is in the design of approximation algorithms for several finite languages of necklaces $\mathcal{L}^{\Sigma}_{\ell}$, $\mathcal{L}^{\Sigma}_{\mathbf{P}}$ and $\mathcal{L}^{\Sigma\setminus F}_{\ell}$
with logarithmic approximation 
factor in the context of $\ell$ and $k$ and log-linear for $\mathcal{L}^{\leq\Sigma}_{\ell}$. 
The first algorithm is based on building a prefix tree of necklaces utilising previously designed ranking and unranking procedures. 
Then we extend an approximation algorithm for a language of necklaces with forbidden words by designing new procedures for ranking and unranking of necklaces under forbidden words and Parikh map constraints, 
where both limitations are motivated by natural material science constraints. Finally we propose a different technique based on building $k$ centres for $\mathcal{L}^{\Sigma}_{\ell}$ 
via the de Bruijn sequences, which can find a solution in linear time with a constant approximation factor.

\section{Preliminaries}
\label{sec:prelims}


A formal \emph{language} $\mathcal{L}$ consists of words whose letters are taken from an alphabet and are defined according to a specific set of rules.
We focus on \emph{cyclic languages} languages consisting of cyclic words {\em only}.
A cyclic word is the equivalence class of words under the \emph{cyclic shift} operation, also known as {\em necklace}.
A cyclic shift of size $i$ moves the suffix of length $i$ to the front of the word, while maintaining the relative order within the suffix.
More formally, the cyclic shift of length $i$ on the word $w = w_1 w_2 \hdots w_n$ will transform it to $w_{n - i + 1} \hdots w_n w_1 \hdots w_{n - i}$.
Any word that is a member of this equivalence class is a \emph{representation} of the cyclic word.
In general, cyclic words are represented by the lexicographically smallest word in this equivalence class, known as the \emph{canonical representation}.
{\em Lyndon words} 
form the set of aperiodic necklaces;  necklaces such that given the canonical form $w_1 w_2 \hdots w_\ell$, there exists no cyclic shift such that $w_i w_{i + 1} \hdots w_{\ell} w_1 \hdots w_{i - 1} = w_1 \hdots w_\ell$ for any $i \neq 1$.

The class of \emph{fixed length cyclic languages} consists of all cyclic words from an alphabet $\Sigma$, made of any combination of characters of $\Sigma$ with length $\ell$.
This is equivalent to the set of necklaces of length $\ell$ over the alphabet $\Sigma$.
This language will be denoted $\mathcal{L}_{\ell}^{\Sigma}$.
We focus on two restricted cases of this language.

The first is the \emph{fixed length cyclic language with forbidden subwords}. This is a fixed length cyclic language
where any word does not contain subwords 
from a set of forbidden words.
Given the set of forbidden words $F$, the language of all words of length $\ell$ without these will be denoted $\mathcal{L}_{\ell}^{\Sigma \setminus F}$.
Formally a word $w \in \mathcal{L}_{\ell}^{\Sigma}$ will be in $\mathcal{L}_{\ell}^{\Sigma \setminus F}$ if there is no representation of it of the form $w_1 \hdots w_i f_1 \hdots f_j w_{i + j + 1} \hdots w_{\ell}$ for any word $f_1 \hdots f_j \in F$.

The second language is the \emph{fixed content cyclic language}.
Here, in any word the number of occurrences of each letter of $\Sigma$ is fixed.
The number of occurrences of each character will be given as a vector $\mathbf{P}$, where $\mathbf{P}_i$ denotes the number of occurrences of the $i^{th}$ character.
This language will be denoted $\mathcal{L}_{\mathbf{P}}^{\Sigma}$.

A generalisation of $\mathcal{L}_{\ell}^{\Sigma}$ and $\mathcal{L}_{\ell}^{\Sigma \setminus F}$ is to \emph{maximum length languages}.
These contain every word of length less than or equal to $\ell$ in the corresponding fixed length language.
For a given $\ell$, the maximum length generalisation of $\mathcal{L}_{\ell}^{\Sigma}$ will be denoted $\mathcal{L}_{\leq\ell}^{\Sigma}$ and $\mathcal{L}_{\ell}^{\Sigma \setminus F}$ will be denoted $\mathcal{L}_{\leq\ell}^{\Sigma \setminus F}$.
Formally, this can be written as $\mathcal{L}_{\leq \ell}^{\Sigma} = \cup_{i = 1}^{\ell} \mathcal{L}_{i}^{\Sigma}$ and $\mathcal{L}_{\leq \ell}^{\Sigma \setminus F} = \cup_{i = 1}^{\ell} \mathcal{L}_{i}^{\Sigma \setminus F}$.
As a cyclic word of length $\ell$ can be seen as a word of infinite length with a period of a most $\ell$, when comparing two cyclic words it makes sense to look at two representatives of these words with the same length, which 
will be called the \emph{same length representatives} of the words.


\noindent
{\bf The Overlap Coefficient.} 
The \emph{Overlap coefficient} of the sets $A$ and $B$ is defined as the size of the intersection of the two sets, normalised by the size of the smaller set, i.e. $\mathfrak{C}(A,B) = \frac{|A \cap B|}{\min(|A |,|B|)}$.
For the overlap coefficient measures the closeness of two sets $A$ and $B$, where a value of $1$ means that the two sets are identical, and a value of $0$ means there are no shared elements.

The Overlap coefficient $\mathfrak{C}(\alpha,\beta)$ for two cyclic words $\alpha$ and $\beta$ is defined as the overlap coefficient between the multisets of all subwords of the same length representatives of $\alpha$ and $\beta$.
Given the same length representative $a$ some word $\alpha$ of length $p \cdot \ell$, the multiset of subwords of length $l$ is the multiset of each subword starting at every position from $1$ to $p \cdot \ell$, labelled by the number of occurrences of this subword up to this point.
Note that as this word is cyclic, subwords of length $l$ may occur beginning in the last $l - 1$ positions of the word, giving a total of $p \cdot \ell$ subwords of length $l$ for any $l$.
For example, given the word $aaab$, the multiset of subwords of length 2 are $\{aa_1,aa_2,ab_1,ba_1\}$.
The multiset of all subwords is simply  the union of the multisets of the subwords for every length from 1 to $p \cdot \ell$, having a total size of $(\ell \cdot p)^2$.
An example of this is given explicitly between $ab$ and $abb$ in Figure \ref{fig:overlap_example}.

To use this as a distance, the measure will be inverted so that a value of $1$ will imply the strings share no similarity and a value of $0$ implies they represent the same word.
From this, the Overlap distance between two cyclic words $\alpha$ and $\beta$ will be given by
\[
\mathfrak{O}(\alpha, \beta) = \begin{cases} 
    \infty, & \text{if~}\mathfrak{C}(\alpha, \beta) = 0\\
    0 & \text{if~} \mathfrak{C}(\alpha, \beta) = 1\\
    \frac{1}{\mathfrak{C}(\alpha, \beta)} & \text{Otherwise.}
\end{cases}
\]
\noindent
{\bf The k-Centre Problem on Necklaces.}
With this distance, the \emph{$k$-centre problem for languages} can be defined.
This can be thought of as the $k$-centre problem on the complete weighted graph where every word is represented by a node, and each edge has a weight given by the overlap distance between the two words.
As in the graph case, the goal here is to choose $k$ words such that the distance from any word in the language and its nearest centre is minimised. However in a case of  \emph{$k$-centre problem for languages} the size of associated graph maybe exponential in relation to the description of the language, i.e., the input size.

\begin{figure}
    \centering
    \scriptsize{
    \begin{tabular}{l|l|l}
        & word $ab$ with representative $(ab)^3$ 
        & word $abb$ with representative $(abb)^2$ \\ 
        \hline
        1 & $\mathbf{a}_1, \mathbf{b}_1, \mathbf{a}_2, \mathbf{b}_2, a_3, \mathbf{b}_3$ & $\mathbf{a}_1, \mathbf{b}_1, \mathbf{b}_2, \mathbf{a}_2, \mathbf{b}_3, b_4$\\
        2 & $\mathbf{ab}_1, \mathbf{ba}_1,\mathbf{ab}_2, \mathbf{ba}_2,ab_3, ba_3$ & $\mathbf{ab}_1, bb_1,\mathbf{ba}_1, \mathbf{ab}_2, bb_2, \mathbf{ba}_2$\\
        3 & $aba_1, \mathbf{bab}_1,aba_2, \mathbf{bab}_2,aba_3, bab_3$ & $abb_1, bba_1, \mathbf{bab}_1, abb_2, bba_2, \mathbf{bab}_2$\\
        4 & $abab_1, baba_1,abab_2, baba_2,abab_3, baba_3 $ & $abba_1, bbab_1, babb_1, abba_2, bbab_2, babb_2$\\
        5 & $ababa_1,babab_1,ababa_2,babab_2,ababa_3, $ & $abbab_1, bbabb_1, babba_1,abbab_2, bbabb_2,$\\
          & $babab_3$ & $babba_2$\\
        6 & $ababab_1, bababa_1,ababab_2, bababa_2,$ & $abbabb_1, bbabba_1, babbab_1,abbabb_2$\\
          & $ababab_3, bababa_3$ & $bbabba_2, babbab_2$
    \end{tabular}
    }
    
    \caption{Example of the overlap coefficient calculation for a pair of words $ab$ and $abb$. 
    There are $11$ common subwords out of the total number of $36$ subwords of length from $1$ till $6$ in the same length cyclic words representatives $(ab)^3$ and $(abb)^2$, so $\mathfrak{C}(ab,abb)= \frac{11}{36}$
    and  $\mathfrak{O}(ab,abb) =  \frac{36}{11}$.
    }
    \label{fig:overlap_example}
\end{figure}

\begin{figure}
    \centering
    \begin{tabular}{l l l l l l}
        A & $aaaa$ & B & $aaab$ & C & $aabb$  \\
        D & $abab$ & E & $abbb$ & F & $bbbb$
    \end{tabular}
    \begin{tabular}{l| l l l l l l}
         $\alpha \backslash \beta$  & A & B & C & D & E & F \\
        \hline
        A & 0 & $\frac{16}{6}$ & $\frac{16}{3}$ & $8$ & $16$ & $\infty$\\
        B & $\frac{16}{6}$ & 0 & $\frac{16}{7}$ & $\frac{16}{6}$ & 4 & $16$\\
        C & $\frac{16}{3}$ & $\frac{16}{7}$ & 0 & $\frac{16}{6}$ & $2$ & $\frac{16}{3}$\\
        D & $8$ & $\frac{16}{6}$ & $\frac{16}{6}$ & 0 & $\frac{16}{10}$ & $8$\\
        E & $16$ & $4$ & $2$ & $\frac{16}{6}$ & 0 & $\frac{16}{6}$\\
        F & $\infty$ & $16$ & $\frac{16}{3}$ & $8$ & $2$ & 0
    \end{tabular}
    
    \caption{Example of the overlap distance $\mathfrak{D}(\alpha,\beta)$ for binary cyclic words of length 4.}
    \label{fig:overlapDistances}
\end{figure}

%
 %
    \begin{problem}
        {\bf $k$-Centre problem for languages:}
     Given a finite language $\mathcal{L}$ and an integer $k$,
%
select $k$ words from $\mathcal{L}$ forming a sample $S$ of $k$-centres, minimising the maximum overlap distance between every word in $\mathcal{L}$ and the nearest member of $S$:
     \begin{align}
        \mathfrak{D}_{\mathcal{L}, k} = \min_{|S| = k} \max_{v \in \mathcal{L}} \min_{s \in S} \mathfrak{O}(s,v).
    \end{align}
     \label{prob:k_sample}
    \end{problem} 
Our goal is to  maximise the length $\lambda$ of the longest subword such that every word in $\mathcal{L}$ shares a subword of length $\lambda$ with at least one member of the sample.
The idea behind this approach is that if a word shares a subword of length $\lambda$ with the sample, it will also share 2 words of length $\lambda - 1$, 3 of length $\lambda - 2$, and so on, for a total of $\frac{\lambda(\lambda + 1)}{2}$ common subwords.
Therefore by increase the length of these subwords by $1$, there is a quadratic increase in the size of the intersection in the overlap coefficient.
This provides a bound on the maximum distance between every word in $\mathcal{L}$ and the centres is created of $\mathfrak{D}_{\mathcal{L}, k} \leq \frac{\ell^2}{\lambda(\lambda + 1)}$.
So our algorithms will be focused 
on maximising the length of $\lambda$ for the languages $\mathcal{L}^{\Sigma}_{\ell}$, $\mathcal{L}_{\leq \ell}^{\Sigma}$, $\mathcal{L}^{\Sigma \setminus F}_{\ell}$, $\mathcal{L}_{\leq \ell}^{\Sigma \setminus F}$, and $\mathcal{L}_{\ell}^{\Sigma}$.
%

\begin{lemma}
For the language $\mathcal{L}_{\ell}$, given $\lambda$ as longest length such that every subword in $\mathcal{L}$ shares a subword of length $\lambda$ with at least one centre, $\mathfrak{D}_{\mathcal{L}_{\ell}, k} \leq \frac{2\ell^2}{\lambda(\lambda + 1)}$.
\label{lem:max_diistance}
\end{lemma}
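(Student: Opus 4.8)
The plan is to convert the hypothesis — a $k$-sample $S$ such that every word of $\mathcal{L}_\ell$ shares a length-$\lambda$ subword with some centre — into a uniform upper bound of $\frac{2\ell^2}{\lambda(\lambda+1)}$ on the overlap distance from each word to its nearest centre, and then observe that $\mathfrak{D}_{\mathcal{L}_\ell,k}$, being a minimum over all $k$-samples, is at most the cost of this particular $S$. First I would pin down the normalising denominator: every word of $\mathcal{L}_\ell$ has length $\ell$, so its same length representative also has length $\ell$, and its multiset of labelled subwords (over all lengths $1,\dots,\ell$) has exactly $\ell^2$ elements. Hence for any $\alpha,\beta\in\mathcal{L}_\ell$ with subword multisets $A,B$ we have $\min(|A|,|B|)=\ell^2$, so $\mathfrak{C}(\alpha,\beta)=|A\cap B|/\ell^2$.

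The heart of the argument is the combinatorial claim that if $\alpha,\beta\in\mathcal{L}_\ell$ have a common subword $u$ of length $\lambda$ then $|A\cap B|\ge\frac{\lambda(\lambda+1)}{2}$. For each $m\in\{1,\dots,\lambda\}$ the word $u$ contains $\lambda-m+1$ length-$m$ factors counted by their starting position in $u$; grouping equal ones, write $c_v$ for the number of positions of $u$ at which the distinct word $v$ occurs, so $\sum_v c_v=\lambda-m+1$. Since the occurrence of $u$ sits inside a representative of $\alpha$, the word $v$ occurs at least $c_v$ times in $\alpha$, and likewise at least $c_v$ times in $\beta$; therefore the labelled elements $v_1,\dots,v_{c_v}$ lie in both $A$ and $B$, i.e. contribute at least $c_v$ elements to $A\cap B$. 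These contributions are pairwise distinct as multiset elements (different lengths give different words, and within a length different words or different occurrence indices are distinct), so summing over $v$ and then over $m$ gives $|A\cap B|\ge\sum_{m=1}^{\lambda}(\lambda-m+1)=\frac{\lambda(\lambda+1)}{2}$. I expect this occurrence-label bookkeeping — making precise that a single shared factor of length $\lambda$ really forces $\frac{\lambda(\lambda+1)}{2}$ matched elements of the two multisets — to be the only subtle point; the rest is arithmetic.

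Putting the pieces together, $\mathfrak{C}(\alpha,\beta)\ge\frac{\lambda(\lambda+1)}{2\ell^2}>0$, hence $\mathfrak{O}(\alpha,\beta)=1/\mathfrak{C}(\alpha,\beta)\le\frac{2\ell^2}{\lambda(\lambda+1)}$ (with $\mathfrak{O}=0$ in the degenerate case $\alpha=\beta$, where the bound is trivial). Now take the sample $S$ from the hypothesis: for every $v\in\mathcal{L}_\ell$ pick $s\in S$ sharing a length-$\lambda$ subword with $v$, so $\min_{s\in S}\mathfrak{O}(s,v)\le\frac{2\ell^2}{\lambda(\lambda+1)}$, and hence $\max_{v\in\mathcal{L}_\ell}\min_{s\in S}\mathfrak{O}(s,v)\le\frac{2\ell^2}{\lambda(\lambda+1)}$. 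Since $|S|=k$ and $\mathfrak{D}_{\mathcal{L}_\ell,k}$ is by definition the minimum of this expression over all $k$-element samples, we conclude $\mathfrak{D}_{\mathcal{L}_\ell,k}\le\frac{2\ell^2}{\lambda(\lambda+1)}$.
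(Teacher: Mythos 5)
Your proof is correct and follows essentially the same route as the paper's: a shared factor of length $\lambda$ forces $\frac{\lambda(\lambda+1)}{2}$ common elements in the two subword multisets, the normalising denominator is $\ell^2$, and inverting the resulting lower bound on the overlap coefficient gives the claimed distance bound. Your version is simply more careful about the multiset occurrence-labelling and about passing from the per-pair bound to the min-max-min quantity $\mathfrak{D}_{\mathcal{L}_\ell,k}$, which the paper leaves implicit.
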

\begin{proof}
Let us assume 
that every word $w \in \mathcal{L}$ 
shares at least one subword of length $\lambda$
with at least one centre $w_c$.
Thus, $w_c$ will contain all subwords of this shared word, giving a total of $\frac{\lambda(\lambda + 1)}{2}$ shared subwords.
This gives an intersection of $w$ with the closest member of the sample set of size at least $\frac{\lambda(\lambda + 1)}{2}$.
As the size of the multiset of subwords (in cyclic words of length $l$) will be $\ell^2$, the total distance will be $\frac{2\ell^2}{\lambda(\lambda + 1)}$.
\end{proof}

\begin{lemma}
For the language $\mathcal{L}^{\Sigma}_{\ell}$ and a sample $S$ of $k$-centres
the optimal overlap distance $\mathfrak{D}_{\mathcal{L}, k}$
between every word in $\mathcal{L}$ and the nearest member of $S$ is no less than
$\frac{\ell^2}{\log_q(\ell^2k)(\log_q(\ell^2k) + 1)}$.
\label{lem:min_diistance}
\end{lemma}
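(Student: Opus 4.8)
I would reduce the statement to a counting fact about subwords. Write $A_v$ for the multiset of labelled subwords of the length‑$\ell$ representative of a word $v$, so that $|A_v|=\ell^2$ and $\mathfrak{O}(s,v)=\ell^2/|A_v\cap A_s|$ (with the usual conventions when the intersection is $0$ or $\ell^2$). It suffices to show that for \emph{every} set $S$ of $k$ words there is a word $w\in\mathcal{L}^{\Sigma}_{\ell}$ with $|A_w\cap A_s|\le\tfrac{\lambda(\lambda+1)}{2}$ for all $s\in S$, where $\lambda=\lfloor\log_q(\ell^2k)\rfloor$: then $\min_{s\in S}\mathfrak{O}(s,w)\ge 2\ell^2/(\lambda(\lambda+1))$, and since $\lambda\le\log_q(\ell^2k)$ and $S$ is arbitrary this gives $\mathfrak{D}_{\mathcal{L},k}\ge 2\ell^2/(\log_q(\ell^2k)(\log_q(\ell^2k)+1))$, which is even a little stronger than the claim. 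Reading the proof of Lemma~\ref{lem:max_diistance} in reverse, $|A_w\cap A_s|$ is governed by the longest subword common to $w$ and $s$: a common subword of length $t$ already forces the $\tfrac{t(t+1)}{2}$ labelled subwords it itself contains to be common, and conversely, if $w$ shares no common subword of length exceeding $\lambda$ with $s$ then $|A_w\cap A_s|$ should be capped at $\tfrac{\lambda(\lambda+1)}{2}$. Hence the whole task becomes: for every $S$ of size $k$, exhibit $w\in\mathcal{L}^{\Sigma}_{\ell}$ that shares no subword of length $\lambda+1$ with any member of $S$.

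\textbf{The counting step.} For this I would use a union bound. Each centre is a necklace of length $\ell$ and hence has at most $\ell$ distinct cyclic subwords of length $\lambda+1$, so the set $T$ of all length‑$(\lambda+1)$ subwords occurring in some centre has $|T|\le k\ell$. Whether a necklace contains a fixed $u\in T$ as a cyclic factor is a property of the cyclic word, hence shared by all $\ell$ rotations of its representative; since at most $\ell\,q^{\ell-\lambda-1}$ linear words of $\Sigma^\ell$ contain a fixed $u$ of length $\lambda+1$ as a cyclic factor (choose one of the $\ell$ occurrence positions and fill the remaining $\ell-\lambda-1$ letters freely), at most $q^{\ell-\lambda-1}$ aperiodic necklaces --- plus the at most $\ell\,q^{\ell/2}$ periodic ones --- contain $u$. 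Summing over $T$, the number of length‑$\ell$ necklaces containing some element of $T$ is at most $k\ell\,q^{\ell-\lambda-1}+\ell\,q^{\ell/2}$, which is strictly below the total number ($\ge q^\ell/\ell$) of length‑$\ell$ necklaces once $q^{\lambda+1}>k\ell^2$, i.e.\ once $\lambda+1>\log_q(\ell^2k)$; this holds for $\lambda=\lfloor\log_q(\ell^2k)\rfloor$, up to absorbing the negligible periodic term (at worst by decreasing $\lambda$ by an additive constant, which does not affect the claimed bound). So some $w\in\mathcal{L}^{\Sigma}_{\ell}$ avoids every element of $T$ and therefore shares no subword of length $\ge\lambda+1$ with any centre, which is exactly what the plan needs.

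\textbf{Expected obstacle.} The counting step is essentially routine provided one exploits the rotation‑invariance of ``contains $u$ as a cyclic factor'' to gain the factor $\ell$; without that one only reaches $\log_q(\ell^3k)$ rather than $\log_q(\ell^2k)$. The genuinely delicate point is the converse inequality $|A_w\cap A_s|\le\tfrac{\lambda(\lambda+1)}{2}$ used in the first paragraph: because $A_w$ and $A_s$ are \emph{multisets}, one has to argue that the common labelled subwords are truly subsumed by the (short) maximal common factors of $w$ and $s$ and do not pile up through many short common subwords carrying large multiplicities. This is the same nested‑subword count that underlies Lemma~\ref{lem:max_diistance}, but here it is used in the hard direction and is, I expect, the real heart of the proof and the step requiring the most care.
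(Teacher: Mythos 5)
Your first half --- bounding the length $\lambda$ of the longest factor that some word can avoid sharing with the centres --- is sound and takes a genuinely different route from the paper. The paper argues via necklace prefixes: every necklace has a length-$\lambda$ prefix of its canonical form, there are at least $q^\lambda/\lambda \ge q^\lambda/\ell$ such prefixes, and the $k$ centres can only cover $k\ell$ of them, giving $\lambda\le\log_q(\ell^2k)$. You instead run a union bound over necklaces containing one of the at most $k\ell$ length-$(\lambda+1)$ factors occurring in the centres, exploiting rotation-invariance to gain a factor $\ell$; this is arguably cleaner, reaches the same threshold $q^{\lambda+1}>k\ell^2$, and yields the same conclusion: some $w$ whose longest common factor with every centre is at most $\log_q(\ell^2k)$.

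The genuine gap is exactly where you suspected it: the conversion of ``longest common factor of $w$ with each centre is at most $\lambda$'' into $|A_w\cap A_s|\le\tfrac{\lambda(\lambda+1)}{2}$. That inequality is false. Take $\Sigma=\{a,b,c\}$, $w=(abc)^{m}$ and $s=(acb)^{m}$ with $\ell=3m$: the cyclic length-$2$ factors of $w$ are $ab,bc,ca$ and those of $s$ are $ac,cb,ba$, so the longest common factor has length $1$; yet each letter occurs $m$ times in both words, so the multiset intersection already has size $\ell=3m$, far exceeding $1\cdot 2/2=1$. In general a word can share up to $\ell$ distinct factors of each length $1,\dots,\lambda$ with a centre, so the only a priori cap that ``no long common factor'' gives is $\lambda\ell$; the nesting count behind Lemma~\ref{lem:max_diistance} works only in the easy direction (a long common factor forces many common subwords), not in the direction you need. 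Writing ``should be capped'' leaves the crux unproven, and the fact that your bound would prove a statement a factor $2$ stronger than the lemma is itself a warning sign. The paper's own proof handles this step by a different (and weaker) claim: it caps the intersection at $\lambda(\lambda+1)$, arguing that a centre of length $\ell<q^{\lambda}+1$ cannot contain every length-$\lambda$ word with one of them repeated, so the worst-case word cannot be forced to share two length-$\lambda$ factors with its nearest centre. You would need an argument of that kind (or a correct substitute) to get from your existence statement to the claimed distance bound; as written, the proposal does not close this step.
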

\begin{proof}
Let $\lambda$ be the longest subword such that every word in $\mathcal{L}$ shares a subword of length $\lambda$ with at least one of the $k$ centres.
To get an upper bound on the size of the $\lambda$, observe that every word in $\mathcal{L}$ is a necklace, therefore every word must have at least one subword of length $\lambda$ that is a prefix of a necklace.
Therefore the longest value for $\lambda$ is the largest value such that there are fewer than $k \times \ell$ necklace prefixes. 
A simple lower bound on the number of necklace prefixes of length $\lambda$ for an alphabet of size $q$ is $\frac{q^\lambda}{\lambda}$.
This can be rewritten as an inequality in terms of $k$ and $\ell$ as $\frac{q^\lambda}{\lambda} \leq k \ell$.
Observing that $\lambda \leq \ell$ gives $\frac{q^{\lambda}}{\ell} \leq k \ell$, giving as a bound on $\lambda$, $\lambda \leq \log_{q}(\ell^2 k)$.

Assume that the furthest word share two subwords of length $\lambda$ with the nearest centre.
For this to be the case, every centre must contain every subword of length $\lambda$, with at least one occurring twice, requiring the string to be of length $q^{\lambda} + 1$, which is clearly greater than $\ell$ under the assumption that $\lambda = \log_q(\ell^2k)$.
Using this as an upper bound, the intersection may be of size no more than $\lambda(\lambda + 1)$, giving a distance of $\frac{\ell^2}{\lambda(\lambda + 1)}$.
Using the upper bound on $\lambda$ gives a lower bound on the distance of $\frac{\ell^2}{\log_q(\ell^2k)(\log_q(\ell^2k) + 1)}$.
\end{proof}

\begin{lemma}
Given an algorithm that can approximate the solution to Problem \ref{prob:k_sample} for $\mathcal{L}_{\ell}^{\Sigma}$ within a factor $f$, the same sample will be an approximation of $(\ell - 1)f$ of the optimal solution to Problem 1 for $\mathcal{L}_{\leq\ell}^{\Sigma}$.
\label{lem:max_length_to_fixed_length}
\end{lemma}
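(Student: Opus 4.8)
The plan is to transfer the performance guarantee of the returned sample from $\mathcal{L}_\ell^\Sigma$ to $\mathcal{L}_{\le\ell}^\Sigma$. Write $S$ for the sample produced by the assumed algorithm and $r=\max_{v\in\mathcal{L}_\ell^\Sigma}\min_{s\in S}\mathfrak{O}(s,v)$ for its covering radius on $\mathcal{L}_\ell^\Sigma$, so that $r\le f\cdot\mathfrak{D}_{\mathcal{L}_\ell^\Sigma,k}$ by hypothesis. It suffices to establish two inequalities: first, that $S$ has covering radius at most $(\ell-1)r$ on all of $\mathcal{L}_{\le\ell}^\Sigma$; and second, that $\mathfrak{D}_{\mathcal{L}_\ell^\Sigma,k}\le\mathfrak{D}_{\mathcal{L}_{\le\ell}^\Sigma,k}$. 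Chaining these yields $\max_{v\in\mathcal{L}_{\le\ell}^\Sigma}\min_{s\in S}\mathfrak{O}(s,v)\le(\ell-1)r\le(\ell-1)f\cdot\mathfrak{D}_{\mathcal{L}_\ell^\Sigma,k}\le(\ell-1)f\cdot\mathfrak{D}_{\mathcal{L}_{\le\ell}^\Sigma,k}$, which is exactly the claim.

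For the first inequality, fix a word $v\in\mathcal{L}_{\le\ell}^\Sigma$ of length $m\le\ell$, and associate to it the necklace $\hat v\in\mathcal{L}_\ell^\Sigma$ spelled by the first $\ell$ letters of $v^\infty$; then $v$ occurs as a cyclic factor of $\hat v$, so the subwords genuinely common to $v$ and $\hat v$ are precisely those avoiding the single ``seam'' introduced by truncating $v^\infty$ at length $\ell$. Since $\hat v\in\mathcal{L}_\ell^\Sigma$, some centre $s\in S$ satisfies $\mathfrak{O}(s,\hat v)\le r$, so $s$ and $\hat v$ share at least $\ell^2/r$ subwords counted with multiplicity in their length-$\ell$ representatives. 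The key step is to show that a comparable amount of intersection survives for $s$ and $v$: one bounds how many of the shared subwords of $s$ and $\hat v$ are forced to meet the seam (using that a subword of length $j\le m$ has at least $m-j+1$ seam-free cyclic occurrences in $\hat v$), and then re-expresses the surviving intersection together with the normalising total in the common $\lcm(m,\ell)$-length representatives of $s$ and $v$; the combined loss is at most a factor $\ell-1$, so $\mathfrak{C}(s,v)\ge\tfrac{1}{(\ell-1)r}$ and hence $\mathfrak{O}(s,v)\le(\ell-1)r$. As $v$ was arbitrary, $S$ covers $\mathcal{L}_{\le\ell}^\Sigma$ within radius $(\ell-1)r$. (Alternatively, one can route this step through Lemma \ref{lem:max_diistance} by transferring the common-subword length $\lambda$ realised by $S$ rather than the radius directly.)

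For the second inequality, take an optimal sample $T^*\subseteq\mathcal{L}_{\le\ell}^\Sigma$ of size $k$ for $\mathcal{L}_{\le\ell}^\Sigma$. Replace each centre of $T^*$ whose length $m$ divides $\ell$ by its period-$\ell$ repetition: this changes no distance, since a necklace and its repetition have overlap coefficient $1$ with each other and hence identical overlap distance to every word. Centres whose length does not divide $\ell$ are replaced by the same repeat-and-truncate construction used above, noting that when we measure only against the words of $\mathcal{L}_\ell^\Sigma$ the radius does not increase; this yields a size-$k$ sample inside $\mathcal{L}_\ell^\Sigma$ covering $\mathcal{L}_\ell^\Sigma$ within $\mathfrak{D}_{\mathcal{L}_{\le\ell}^\Sigma,k}$, so $\mathfrak{D}_{\mathcal{L}_\ell^\Sigma,k}\le\mathfrak{D}_{\mathcal{L}_{\le\ell}^\Sigma,k}$, and the proof concludes. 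I expect the main obstacle to be the subword bookkeeping in the first step: one must verify that the combined effect of deleting seam-touching subwords and of passing to $\lcm(m,\ell)$-length representatives costs no more than the factor $\ell-1$; a secondary difficulty is handling, in the second step, centres of length not dividing $\ell$ without appealing to a triangle inequality, which $\mathfrak{O}$ need not satisfy.
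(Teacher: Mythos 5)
Your high-level decomposition (covering radius of $S$ on $\mathcal{L}_{\le\ell}^{\Sigma}$ is at most $(\ell-1)r$, plus $\mathfrak{D}_{\mathcal{L}_{\ell}^{\Sigma},k}\le\mathfrak{D}_{\mathcal{L}_{\le\ell}^{\Sigma},k}$) is a cleaner statement of what the lemma needs than the paper gives, but both of your key steps are asserted rather than proven, and the primary route you choose for the first one does not go through. The problem is that transferring the \emph{radius} $r$ from $\hat v$ to $v$ via seam bookkeeping gives an \emph{additive} loss, not a multiplicative one: a shared subword of length $j$ can have up to $j-1$ of its cyclic occurrences in $\hat v$ crossing the seam, so summing over all lengths the seam can account for on the order of $\ell^2/2$ elements of the multiset intersection. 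The hypothesis only guarantees an intersection of size $\ell^2/r$, so whenever $r$ is bounded away from $1$ the entire intersection of $s$ with $\hat v$ may consist of seam-crossing subwords that are not subwords of $v$ at all, leaving $\mathfrak{O}(s,v)=\infty$ while your claimed bound $(\ell-1)r$ is finite. No factor-$(\ell-1)$ accounting can rescue this. The paper avoids the issue by transferring a different invariant: a single common subword of length $\lambda$ that is guaranteed to lie inside $v$ itself (because the centres contain every necklace prefix of length $\lambda$, and the canonical representative of $v$ --- or of its repetition, when $|v|<\lambda$ --- begins with one). That one surviving subword regenerates the full $\lambda(\lambda+1)/2$ intersection, each element with multiplicity at least $l$ in the length-$\lcm(l,\ell)$ representatives, and the factor $\ell-1$ then comes purely from the change of normalisation $(l\ell)^2$ versus $\ell^2$, with no seam losses to control. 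Your parenthetical ``alternatively, route through Lemma~\ref{lem:max_diistance}'' is in fact the only route that works, and it is the paper's; as written it is a remark, not a proof.

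The second inequality, $\mathfrak{D}_{\mathcal{L}_{\ell}^{\Sigma},k}\le\mathfrak{D}_{\mathcal{L}_{\le\ell}^{\Sigma},k}$, is also not established. Replacing a centre of length $m\mid\ell$ by its $(\ell/m)$-fold repetition is fine, but for $m\nmid\ell$ your claim that repeat-and-truncate ``does not increase'' the distance to length-$\ell$ words is exactly the seam problem again: truncating $c^{\infty}$ at length $\ell$ both destroys occurrences of subwords of $c$ and changes the normalising length from $\lcm(m,\ell)$ to $\ell$, and neither effect is controlled in your sketch. You flag both difficulties yourself as ``obstacles,'' which is honest, but they are the entire mathematical content of the lemma. (For what it is worth, the paper's own proof silently skips the second inequality and tacitly assumes the sample has the all-prefixes-of-length-$\lambda$ structure produced by its algorithms, so the lemma as literally stated --- for an arbitrary $f$-approximate sample --- is proved by neither argument; but your write-up should at minimum carry out the $\lambda$-based transfer rather than the radius-based one.)
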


\begin{proof}
Let $\lambda$ be the length of the longest subword such that every word in $\mathcal{L}_{\ell}^{\Sigma}$ shares a subword of length $\lambda$ with at least one centre in the sample.
Observe that for every length $l \leq \ell$, every word in $\mathcal{L}_l^{\Sigma}$ will occur as a subword of at least $q$ words in $\mathcal{L}_{\ell}^{\Sigma}$.
As the samples must include every subword of length $\lambda$ as a subword, any word of length $\lambda < l < \ell$ will also share a common subword of length $\lambda$ with each centre, which when converted to the same length representatives of length $l\cdot\ell$ will lead to an overlap coefficient of $\frac{l\lambda(\lambda + 1)}{2(l\ell)^{2}}$.
For words of length $l < \lambda$, observe that the same length representative with a word of length $\ell$ will also be of length at least $\ell$, ensuring that it shares a common subword of length at least $\lambda$.
As in the case $l \geq \lambda$, the overlap coefficient between some word with length $l < \lambda$ will be $\frac{l\lambda(\lambda + 1)}{2(l\ell)^{2}}$

Dividing this by the bound given in Lemma \ref{lem:max_diistance} gives $\frac{2l\lambda(\lambda + 1)\ell^2}{2\ell^2l^2\lambda(\lambda + 1)} = \frac{1}{l}$.
The worst case for this will be $l = \ell - 1$.
Therefore, the solution for an algorithm guaranteeing an approximation factor of $f$ for Problem \ref{prob:k_sample} on the language $\mathcal{L}_{\ell}^{\Sigma}$ will give a solution that is an approximation of the optimal solution by a factor of $(\ell - 1)f$.
\end{proof}

\section{Sampling via Prefix Trees}

In this section we will look at a generic framework for sampling necklaces under various constraints.
This will give a logarithmic approximation factor relative to the number of samples in the general case.
In Section \ref{sec:prefix_tree} we will present the algorithm and show how it preforms on the languages $\mathcal{L}_{\ell}^{\Sigma}$ and $\mathcal{L}_{\mathbf{P}}^{\Sigma}$.
In Section \ref{sec:forbidden_words} we will extend the ranking function for necklaces to the language $\mathcal{L}_{\ell}^{\Sigma \setminus F}$.


\subsection{The general algorithm}
\label{sec:prefix_tree}

The underlying idea behind the first algorithm is of building samples based on covering different possible prefixes of necklaces.
The reasoning behind this is twofold: first the set of prefixes for necklaces is much more limited than it is for unconstrained words, and secondly by covering all prefixes up to a given length $\lambda$, all words are guaranteed to share a subword of length $\lambda$ with some sample.

In order to do this effectively, it will be important to compute how many necklaces have a given prefix.
This may be done by \emph{ranking} the necklace.
The rank of a necklace  is the number of necklaces for which the canonical representation is lexicographically smaller than it.
The first algorithm to rank necklaces was given by Kopparty et. al. \cite{Kopparty2016} without a tight bound, followed by an algorithm by Sawada and Williams \cite{Sawada2017} who provided an $O(\ell^3)$ time algorithm.
Sawada and Williams show how this may be used to find the number of necklaces with a given prefix, by computing the difference between the ranks of smallest and largest necklace with the given prefix, both of which may be done in quadratic time.
This ranking function has been further extended to the fixed content case by Hartman and Sawada \cite{Hartman2019}.\\

\noindent
{\bf The k-centres selection based on a tree of  necklace prefixes:}
%
The algorithm recursively builds the tree of possible necklace prefixes, starting with the empty string, in a breadth first manner, continuing until there are $k$ such prefixes.
Once these prefixes have been generated, the centres can be built as necklaces containing these prefixes.

This is achieved as follows.
At each step there is the set of prefixes $P$ of a length $l$ such that the number of prefixes is less than $k$.
Observe that every prefix in the set of prefixes of length $l + 1$, $P'$, will consist of a prefix from $P$ followed by a character.
Note also that every $p \in P$ must be the prefix for at least one member of $P'$.
Therefore to generate $P'$, each prefix in $p$ must be considered.
Given $p \in P$ and $\sigma \in \Sigma$, $p \sigma$ will be in $P'$ if and only if it is the prefix of a necklace.
To determine this property, the \emph{rank} of the smallest and largest non-cyclic words starting with $p \sigma$ amongst the set of necklaces can be used.
The rank of a word $w$ amongst the set of necklaces will be denoted $rank(w)$
Let $p\sigma 1^{\ell - l - 1}$ denote the smallest word starting with $p\sigma$, i.e. the word consisting of $p\sigma$ followed by $\ell - l - 1$ copies of the smallest character, and $p\sigma q^{\ell - l - 1}$ denote the largest word starting with $p\sigma$.
The number of necklaces sharing the prefix $p\sigma$ will be given by $rank(p\sigma q^{\ell - l - 1}) - rank(p\sigma 1^{\ell - l - 1})$.
If there are no such necklaces, then $p\sigma$ will be discarded, otherwise it will be added to $P'$.
The set $P'$ will be generated by repeating this process for every $p \in P$, $\sigma \in \Sigma$.
Once the size of $P'$ is greater than $k$, the algorithm will terminate using the prefixes in $P$ as a basis.
For each $p \in P$, a centre will be generated by appending an arbitrary subword following the prefix.

\begin{lemma}
There exists a polynomial-time algorithm to construct k centres of $\mathcal{L}_{\ell}^{\Sigma}$ such that every word in $\mathcal{L}_{\ell}^{\Sigma}$ shares a common substring of length at least 
$\log_q k - 1$ 
 with the nearest centre, and therefore will be at a distance of no more than $\frac{2\ell^2}{\log^2_qk}$ from the nearest centre.
\label{thm:alg_1}
\end{lemma}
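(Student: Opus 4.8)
The plan is to analyze the prefix-tree algorithm described just above the statement and show it achieves the claimed covering length. First I would establish the key structural fact about the algorithm's termination: the algorithm builds the tree of necklace prefixes in breadth-first order, maintaining at each level $l$ a set $P$ of all length-$l$ prefixes of necklaces, and stops at the last level $l$ for which $|P| \le k$. Since the final prefix set $P$ used as the basis has size at most $k$, and since every length-$l$ prefix of a necklace appears in $P$ (the algorithm discards only strings that are not prefixes of any necklace), I would argue that the length $\lambda$ of these prefixes satisfies a lower bound in terms of $k$. The crucial counting step is: the number of distinct length-$l$ prefixes of necklaces over an alphabet of size $q$ is at most $q^l$, but more usefully, at level $l$ the algorithm has not yet terminated as long as the number of necklace prefixes of length $l$ is at most $k$. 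I want to conclude $\lambda \ge \log_q k - 1$, so I need a lower bound on how many levels the tree can grow before exceeding $k$ prefixes — equivalently an upper bound on the number of necklace prefixes of a given length.

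The main obstacle is bounding the number of necklace prefixes of length $l$ from above by something like $q^l$ (trivially true) but then relating this to $k$ correctly to extract $\lambda \ge \log_q k - 1$ rather than $\lambda \ge \log_q k$. Here is where the "$-1$" comes from: the algorithm terminates at level $l+1$ when $|P'| > k$, so it uses prefixes of length $\lambda = l$ where $|P'| > k$ but $|P| \le k$. Since $|P'| \le q \cdot |P|$ is false in general (not every prefix extends in every way), I would instead use that $|P'| \le q^{\lambda+1}$ and $|P'| > k$, hence $q^{\lambda+1} > k$, giving $\lambda + 1 > \log_q k$, i.e. $\lambda > \log_q k - 1$, so $\lambda \ge \log_q k - 1$ after rounding. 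Wait — I should double check the direction: we need $|P'|>k$ to force termination, and $|P'|\le q^{\lambda+1}$ where $\lambda$ is the length of strings in $P$; combining gives $q^{\lambda+1} \ge |P'| > k$ hence $\lambda+1 > \log_q k$. This is the correct chain and it is the delicate point to get right.

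Next I would verify the covering claim: once we have the set $P$ of necklace prefixes of length $\lambda$ and extend each to a full centre (an arbitrary necklace with that prefix, which exists by construction since every prefix in $P$ is a prefix of some necklace), I claim every word $w \in \mathcal{L}_\ell^{\Sigma}$ shares a subword of length $\lambda$ with some centre. Indeed, $w$ is a necklace, so in its canonical (or any cyclic) representation it has a length-$\lambda$ prefix that is a prefix of a necklace, hence lies in $P$ (using that $P$ contains \emph{all} length-$\lambda$ necklace prefixes, which holds because the algorithm only pruned non-prefixes); that prefix appears as a subword of the corresponding centre. Actually I must be careful: $P$ is the set at level $\lambda$ which has size $\le k$, and it consists of exactly the length-$\lambda$ prefixes of necklaces, so every necklace of length $\ell \ge \lambda$ shares its canonical-prefix of length $\lambda$ with one of the centres. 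Thus the covering subword length is at least $\lambda \ge \log_q k - 1$.

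Finally I would plug into Lemma~\ref{lem:max_diistance}: with covering length $\lambda \ge \log_q k - 1$, the distance bound is $\mathfrak{D}_{\mathcal{L}_\ell^\Sigma, k} \le \frac{2\ell^2}{\lambda(\lambda+1)} \le \frac{2\ell^2}{(\log_q k - 1)\log_q k}$, which I would then bound by $\frac{2\ell^2}{\log_q^2 k}$ up to the stated approximation (noting the paper's displayed bound $\frac{2\ell^2}{\log_q^2 k}$ should be read with this $\lambda(\lambda+1) \ge \log_q^2 k$-type estimate, or the statement tacitly uses $\lambda \ge \log_q k$ in the denominator's square; in any case the dominant term is $\log_q^2 k$). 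I also need to note the polynomial running time: the tree has at most $k\ell$ nodes total across all levels, and for each node the algorithm performs two rank computations, each in $O(\ell^3)$ time by the Sawada–Williams ranking algorithm cited earlier, so the total time is polynomial in $k$ and $\ell$. The step I expect to be the main obstacle is pinning down the exact off-by-one in $\lambda$ versus $\log_q k$ and making sure the termination condition ($|P'|>k$ with $|P|\le k$) is used consistently with the counting bound $|P'|\le q^{\lambda+1}$.
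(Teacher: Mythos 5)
Your proposal is correct and follows the same overall route as the paper: analyse the breadth-first prefix-tree algorithm, lower-bound the prefix length $\lambda$ at termination, and feed $\lambda$ into Lemma~\ref{lem:max_diistance}. The one place where you genuinely diverge is the counting step that produces $\lambda \ge \log_q k - 1$. The paper works from the \emph{non-termination} condition at level $\lambda$ (the set of necklace prefixes of length $\lambda$ has size at most $k$) and upper-bounds the number of such prefixes by the cumulative Lyndon-word count $\sum_{i=1}^{\lambda} q^i/i \le \frac{q(q^\lambda-1)}{q-1}$, then solves $\frac{q(q^\lambda-1)}{q-1}\le k$. You instead work from the \emph{termination} condition at level $\lambda+1$ ($|P'|>k$) together with the trivial bound $|P'|\le q^{\lambda+1}$, giving $q^{\lambda+1}>k$ directly. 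Your version is more elementary (it needs no facts about prenecklace or Lyndon-word counts) and reaches the same conclusion; the paper's version additionally certifies that the tree really does grow to depth roughly $\log_q k$ before exceeding $k$ nodes, which is also what it uses to bound the number of rank queries in the running-time analysis. Your covering argument (every necklace's canonical rotation has a length-$\lambda$ prefix lying in $P$, hence shared with the centre extending that prefix) and your polynomial running-time accounting match the paper's intent. You are also right to flag that $\lambda(\lambda+1)\ge(\log_q k-1)\log_q k$ does not literally dominate $\log_q^2 k$, so the displayed bound $\frac{2\ell^2}{\log_q^2 k}$ only holds up to this lower-order slack; the paper glosses over the same point, so this is a defect of the statement rather than of your argument.
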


\begin{proof}

Let $\lambda$ be the length of the prefixes at the termination of the algorithm.
To bound the length of $\lambda$, observe that each sample corresponds to a prefix of length $\lambda$.
Therefore, this becomes the problem of determining the largest value of $\lambda$ such that the size of the set is less than $k$.
An upper bound on the size of the set of necklace prefixes of length $\lambda$ can be taken as the sum of the upper bound on the number of Lyndon words of length 1 to $\lambda$.
This gives the $\sum_{i = 1}^{\lambda} \frac{q^i}{i}$.
Ignoring the divisor by $i$ allows the upper bound to be rewritten into the inequality $\frac{q(q ^\lambda - 1)}{(q - 1)} \leq k$.
Note that $\log_q(k(q - 1))-1=\log_q k+ \log_q(q - 1))-1$, but $\log_q(q - 1)<1$ for any $q\geq 2$, so if we are considering only integer values then  $\floor*{\log_q(k(q - 1))-1}=\log_q(k) - 1$.
Lemma \ref{lem:max_diistance} gives a lower bound on the distance between every word in $\mathcal{L}_\ell^{\Sigma}$ and the nearest centre in the sample of $\frac{2\ell^2}{(\log_qk-1)(\log_qk-2)}$,  which is bounded by $\frac{2\ell^2}{\log^2_qk}$.

To show that the method will terminate in polynomial time, we note that the time to compute the number of necklaces with a given prefix will be $O(\ell^2)$.
For every $i \leq \lambda + 1$, at most $k$ samples will be checked.
To determine the longest $\lambda$, let there be $p_i$ prefixes of length $i$.
Observe that there will be at least $p_i + q - 1$ words of length $i = 1$.
Therefore the number of prefixes of length $i$ will be at least $(i + 1)q - i$.
Therefore the longest length will be $\frac{k - q}{q - 1}$.
Thus the maximum number prefixes that need to be checked will be $k\cdot \frac{k - q}{q - 1}$ and the total complexity will be $O\left(k\cdot \frac{k - q}{q - 1}\ell^2\right)$.
\end{proof}

\begin{theorem}\label{prefix_tree}
Problem 1 
can be solved by the polynomial time approximation algorithm for a language $\mathcal{L}_{\ell}^{\Sigma}$ 
with an approximation factor $O (log^2_{k} \ell)$  
and a language $\mathcal{L}_{\leq \ell}^{\Sigma}$ with
$O(\ell \cdot \log^2_k(\ell))$.
\end{theorem}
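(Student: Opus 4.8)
The plan is to obtain Theorem \ref{prefix_tree} as the composition of the three preceding lemmas: the upper bound on the distance achieved by the prefix-tree algorithm (Lemma \ref{thm:alg_1}), the matching lower bound on the optimum (Lemma \ref{lem:min_diistance}), and the transfer from $\mathcal{L}_{\ell}^{\Sigma}$ to $\mathcal{L}_{\leq\ell}^{\Sigma}$ (Lemma \ref{lem:max_length_to_fixed_length}). First, for $\mathcal{L}_{\ell}^{\Sigma}$, Lemma \ref{thm:alg_1} supplies a polynomial-time algorithm --- running in time $O\!\left(k\cdot\frac{k-q}{q-1}\,\ell^2\right)$ --- whose $k$ centres are within distance $\frac{2\ell^2}{\log_q^2 k}$ of every word, while Lemma \ref{lem:min_diistance} shows $\mathfrak{D}_{\mathcal{L}_{\ell}^{\Sigma},k}\ge \frac{\ell^2}{\log_q(\ell^2 k)(\log_q(\ell^2 k)+1)}$. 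Dividing the first bound by the second bounds the approximation ratio by
\[
\frac{2\ell^2\,\log_q(\ell^2 k)\bigl(\log_q(\ell^2 k)+1\bigr)}{\ell^2\,\log_q^2 k}
\;=\;\frac{2\,\log_q(\ell^2 k)\bigl(\log_q(\ell^2 k)+1\bigr)}{\log_q^2 k}.
\]

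Next I would clean up this expression by a change of logarithm base. Since the algorithm of Lemma \ref{thm:alg_1} only produces non-trivial prefixes when $k\ge q$, we may assume $\log_q k\ge 1$, so $\log_q(\ell^2 k)+1 = O(\log_q(\ell^2 k))$, and the ratio is $O\!\left(\bigl(\log_q(\ell^2 k)/\log_q k\bigr)^2\right) = O\!\left(\bigl(\log_k(\ell^2 k)\bigr)^2\right)$. Expanding $\log_k(\ell^2 k) = 2\log_k \ell + 1$ gives a ratio of $O\!\left((2\log_k\ell+1)^2\right) = O(\log_k^2 \ell)$ (in the regime $\ell\ge k$; if $\ell<k$ the ratio is $O(1)$, which is stronger). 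This establishes the claim for $\mathcal{L}_{\ell}^{\Sigma}$.

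Finally, for $\mathcal{L}_{\leq\ell}^{\Sigma}$, I would run the same algorithm and invoke Lemma \ref{lem:max_length_to_fixed_length}: a sample that $f$-approximates Problem \ref{prob:k_sample} on $\mathcal{L}_{\ell}^{\Sigma}$ is an $(\ell-1)f$-approximation on $\mathcal{L}_{\leq\ell}^{\Sigma}$. With $f = O(\log_k^2 \ell)$ this yields an approximation factor of $O(\ell\cdot\log_k^2 \ell)$, and the running time is the polynomial time of Lemma \ref{thm:alg_1}. The substantive content of the theorem lives entirely in the three lemmas; the only thing to watch is the bookkeeping with logarithm bases and with the additive constants inside the bounds of Lemmas \ref{thm:alg_1} and \ref{lem:min_diistance}, so that the two $\log$-polynomials collapse cleanly into $\log_k^2\ell$ without an extraneous factor.
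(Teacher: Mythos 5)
Your proposal is correct and follows essentially the same route as the paper's own proof: divide the upper bound of Lemma \ref{thm:alg_1} by the lower bound of Lemma \ref{lem:min_diistance}, simplify the resulting ratio to $O(\log_k^2\ell)$ by a change of logarithm base, and invoke Lemma \ref{lem:max_length_to_fixed_length} for the $\mathcal{L}_{\leq\ell}^{\Sigma}$ case. Your handling of the base change and of the regimes $\ell\ge k$ versus $\ell<k$ is in fact slightly more careful than the paper's, which passes from $\log_q^2(\ell^2 k)/\log_q^2 k$ to $\log_q^2\ell/\log_q^2 k$ without comment.
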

\begin{proof}
Recall, from Lemma \ref{lem:min_diistance}  the lower bound on the value $\mathfrak{D}_{\mathcal{L}_{\ell}^{\Sigma}, k}$ is $\frac{\ell^2}{\log_q(\ell^2k)(\log_q(\ell^2k) + 1)}$.
Dividing the bound from Lemma \ref{thm:alg_1} by the lower bound on the distance from Lemma \ref{lem:min_diistance} gives a performance ratio $\frac{2\log_q(\ell^2k)(\log_q(\ell^2k) + 1)}{\log^2_qk}$.
Then in the big O notation it can be simplified to $O\left(\frac{\log^2_q(\ell^2k)}{\log^2_q(k)}\right)$
=$O\left(\frac{\log^2_q\ell}{\log^2_qk}\right)$
and then to $O\left(log^2_{k} \ell \right)$ by changing the base to $k$.
Therefore Problem \ref{prob:k_sample} can be solved in polynomial time within a factor of $O\left(log^2_{k} \ell \right)$
for $\mathcal{L}_\ell^{\Sigma}$ and by Lemma \ref{lem:max_length_to_fixed_length} $O(\ell(\log^2_k(\ell)))$ for $\mathcal{L}_{\leq\ell}^{\Sigma}$.
\end{proof}

\begin{lemma}
There exists a polynomial time algorithm for the $k$-centre problem on $\mathcal{L}_{\mathbf{P}}^{\Sigma}$ that can ensure that no word in $\mathcal{L}_{\mathbf{P}}^{\Sigma}$ shares a common substring of length at least $\log_q(k(q - 1)) - 1$ with the nearest centre.
Furthermore, every word in $\mathcal{L}_{\mathbf{P}}^{\Sigma}$ will be no more than  $\frac{2\ell^2}{\left(\log_q^2(k(q - 1))\right)}$ from the nearest centre.
\label{lem:fixed_content_lower_bound}
\end{lemma}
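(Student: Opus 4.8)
The plan is to run the prefix-tree algorithm of Section~\ref{sec:prefix_tree} essentially verbatim on $\mathcal{L}_{\mathbf{P}}^{\Sigma}$, with the single modification that the unrestricted necklace ranking of Sawada and Williams is replaced by the fixed-content necklace ranking of Hartman and Sawada~\cite{Hartman2019}. The only primitive the algorithm needs is: given a candidate prefix $p\sigma$ of length $l+1$, decide whether some necklace of $\mathcal{L}_{\mathbf{P}}^{\Sigma}$ has $p\sigma$ as a prefix (and discard it otherwise). Letting $\mathbf{P}'$ be the residual Parikh vector obtained from $\mathbf{P}$ by subtracting the Parikh vector of $p\sigma$, this fails immediately if any coordinate of $\mathbf{P}'$ is negative; otherwise one forms the lexicographically smallest and the lexicographically largest words that start with $p\sigma$ and spend exactly the multiset $\mathbf{P}'$ on the remaining $\ell-l-1$ positions, ranks both among the fixed-content necklaces via \cite{Hartman2019}, and takes the difference (with the usual $\pm 1$ correction according to whether the smallest completion is itself a canonical necklace). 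A positive count keeps $p\sigma$, a zero count discards it, and each test is a constant number of polynomial-time ranking calls, so the overall procedure stays polynomial.

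With this primitive in place the analysis mirrors Lemma~\ref{thm:alg_1}. Every word of $\mathcal{L}_{\mathbf{P}}^{\Sigma}$ is a necklace, so the length-$\lambda$ prefix of its canonical representation is a prefix of a necklace, and by construction the algorithm only ever produces prefixes that extend to a necklace of $\mathcal{L}_{\mathbf{P}}^{\Sigma}$; hence at termination every word shares its canonical length-$\lambda$ prefix with some centre, and each centre is itself a member of $\mathcal{L}_{\mathbf{P}}^{\Sigma}$ (take the necklace represented by its defining prefix followed by any arrangement of the residual content). To lower bound $\lambda$, note that the set of fixed-content necklace prefixes of length $\lambda$ is contained in the set of all necklace prefixes of length $\lambda$, which, exactly as in Lemma~\ref{thm:alg_1}, has size at most $\sum_{i=1}^{\lambda} q^i/i \le \frac{q(q^{\lambda}-1)}{q-1}$. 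Therefore, as long as $\frac{q(q^{\lambda}-1)}{q-1}\le k$ the current prefix set has size at most $k$ and the algorithm has not stopped, which forces $\lambda \ge \log_q(k(q-1)) - 1$ at termination. (If $|\mathcal{L}_{\mathbf{P}}^{\Sigma}|\le k$ we instead output the whole language, achieving distance $0$, so that case is vacuous.)

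It then remains to invoke Lemma~\ref{lem:max_diistance}: every word of $\mathcal{L}_{\mathbf{P}}^{\Sigma}$ shares a subword of length $\lambda \ge \log_q(k(q-1)) - 1$ with its nearest centre, so its overlap distance is at most $\frac{2\ell^2}{\lambda(\lambda+1)}$, which simplifies to the claimed $\frac{2\ell^2}{\log_q^2(k(q-1))}$ in the only regime of interest, $\log_q(k(q-1))\ge 1$. The running-time bound is obtained exactly as in Lemma~\ref{thm:alg_1}: at most $O(k^2/(q-1))$ prefixes are examined in total, each at the cost of a polynomial number of ranking calls.

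The only genuinely new ingredient, and the step I expect to require the most care, is the counting primitive for fixed-content necklace prefixes: one must verify that the rank of an arbitrary (not necessarily canonical) word among fixed-content necklaces is well defined and efficiently computable from~\cite{Hartman2019}, that bracketing by the smallest and largest completions with the residual Parikh vector $\mathbf{P}'$ indeed isolates exactly the necklaces with the prescribed prefix, and that the boundary cases (smallest completion equal to a canonical necklace, or a residual Parikh vector leaving no freedom to the right) are handled correctly. Everything downstream is a direct re-run of the argument for Lemma~\ref{thm:alg_1}; I would additionally double-check the passage from $\frac{2\ell^2}{\lambda(\lambda+1)}$ with $\lambda = \log_q(k(q-1)) - 1$ to $\frac{2\ell^2}{\log_q^2(k(q-1))}$, treating it, as elsewhere in the paper, up to flooring and lower-order terms.
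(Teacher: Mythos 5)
Your proposal follows essentially the same route as the paper: run the prefix-tree algorithm of Section~\ref{sec:prefix_tree} with the Hartman--Sawada fixed-content ranking substituted for the unrestricted one, bound $\lambda$ by the same prefix-counting argument as in Lemma~\ref{thm:alg_1}, and finish with Lemma~\ref{lem:max_diistance}. You actually spell out the counting primitive (residual Parikh vector, bracketing by smallest and largest completions) in more detail than the paper does, which is a genuine service since the paper only asserts that the adaptation is ``straightforward.''

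The one substantive discrepancy is in how the final bound is reached. From $\lambda \ge \log_q(k(q-1)) - 1$ alone, Lemma~\ref{lem:max_diistance} gives $\frac{2\ell^2}{(\log_q(k(q-1))-1)\log_q(k(q-1))}$, which is strictly \emph{larger} than the claimed $\frac{2\ell^2}{\log_q^2(k(q-1))}$; you acknowledge this and dismiss it as a lower-order issue, but as written your argument does not deliver the stated constant. The paper closes exactly this gap with two structural observations specific to $\mathcal{L}_{\mathbf{P}}^{\Sigma}$ that you do not use: (i) every fixed-content necklace begins with the same (smallest) character, so the prefix tree has a single root branch and the bound on $\lambda$ improves to $\lambda \ge \log_q(k(q-1))$, which already yields $\lambda(\lambda+1) \ge \log_q^2(k(q-1))$; and (ii) all words in $\mathcal{L}_{\mathbf{P}}^{\Sigma}$ have identical Parikh vectors, so the intersection in the overlap coefficient automatically contains $\ell$ common length-one subwords beyond those accounted for by the shared prefix, giving the sharper denominator $2\ell + \log_q^2(k(q-1)) - 2\log_q(k(q-1))$ before relaxing to $\log_q^2(k(q-1))$. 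Incorporating either observation repairs the constant; without one of them your derivation proves a slightly weaker bound than the lemma states.
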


\begin{proof}
This follows from the same methods given for $\mathcal{L}_\ell^{\Sigma}$.
Using the ranking function for fixed content necklaces given by a straight forward adaptation of the ranking function given by Hartman and Sawada \cite{Hartman2019}, the set of prefixes may be generated in the same way as before.
The primary difference in these settings is that the number of necklaces and prefixes thereof are considerably smaller.
Once a set of prefixes is generated, the centres can be generated in the same 
way
as before, with the added constraint that the word satisfies the Parikh vector.

Let $\lambda$ be the length of the longest substring shared by the word in $\mathcal{L}_{\ell}^{\Sigma}$ that is furthest from one of the centres.
By the same arguments as in Theorem \ref{thm:alg_1}, a lower bound of $\lambda$ can be given as $\lambda \geq \left(\log_q(k(q - 1))\right) - 1$.
However, unlike in the general case, all necklaces start with the same first character, increasing the lower bound on $\lambda$ to $\lambda \geq \left(\log_q(k(q - 1))\right)$.
Furthermore, as every word in $\mathcal{L}_{\mathbf{P}}^{\Sigma}$ shares the same number of occurrences of every character, therefore the size of the intersection in the overlap coefficient will be $\ell + \frac{\left(\log_q(k(q - 1))\right)\left(\log_q(k(q - 1)) + 1\right)}{2} - \left(\log_q(k(q - 1))\right)$.
Therefore the lower bound on the distance will be no more than $\frac{2\ell^2}{2\ell + \left(\log_q^2(k(q - 1))\right) - 2\left(\log_q(k(q - 1))\right)}$
and then
bounded by $\frac{2\ell^2}{\left(\log_q^2(k(q - 1))\right)}$.
\end{proof}

\begin{theorem}
Problem \ref{prob:k_sample} can be solved for a language $\mathcal{L}_{\mathbf{P}}^{\Sigma}$ can be solved by an approximation factor of $O\left(\frac{\sqrt[q/2]{k \cdot \ell}}{\log_q(k\cdot q)^2}\right)$.
\end{theorem}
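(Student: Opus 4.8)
The plan is to combine the performance guarantee of the prefix‑tree algorithm from Lemma~\ref{lem:fixed_content_lower_bound} with a matching lower bound on the optimum $\mathfrak{D}_{\mathcal{L}_{\mathbf{P}}^{\Sigma},k}$, exactly as the pair Lemma~\ref{lem:min_diistance}/Lemma~\ref{thm:alg_1} was used for $\mathcal{L}_\ell^{\Sigma}$ in Theorem~\ref{prefix_tree}. Lemma~\ref{lem:fixed_content_lower_bound} already supplies the numerator: in polynomial time we can place $k$ centres so that every word of $\mathcal{L}_{\mathbf{P}}^{\Sigma}$ lies within overlap distance $\frac{2\ell^2}{\log_q^2(k(q-1))}$ of a centre. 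Hence it suffices to prove $\mathfrak{D}_{\mathcal{L}_{\mathbf{P}}^{\Sigma},k} = \Omega\big(\ell^2/(k\ell)^{2/q}\big)$ and then divide the two estimates.

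For the lower bound on the optimum I would mirror the argument of Lemma~\ref{lem:min_diistance}. Fix any set $S$ of $k$ centres from $\mathcal{L}_{\mathbf{P}}^{\Sigma}$ and let $\lambda$ be the largest length such that every word of $\mathcal{L}_{\mathbf{P}}^{\Sigma}$ shares a subword of length $\lambda$ with some centre. Each centre is a cyclic word of length $\ell$, hence carries at most $\ell$ distinct subwords of length $\lambda$, so the centres jointly realise a set $C$ of at most $k\ell$ length‑$\lambda$ strings. On the other hand, every $\mathbf{P}$-necklace has a length‑$\lambda$ subword that is a prefix of the canonical representative of a $\mathbf{P}$-necklace, so (following the same reasoning as Lemma~\ref{lem:min_diistance}) for the centres to succeed the set $C$ must contain all admissible length‑$\lambda$ necklace prefixes compatible with $\mathbf{P}$. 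The key combinatorial estimate is a \emph{lower bound on the number of such prefixes}: whereas in the unconstrained case this count is $\Theta(q^\lambda/\lambda)$, under a fixed Parikh vector it is controlled by the admissible Parikh profiles of a length‑$\lambda$ window together with the polynomially many canonical orderings of the non‑minimal letters, which makes it polynomial in $\lambda$ of degree $\Theta(q)$ --- concretely one shows it is at least a multinomial quantity of shape $(\lambda/q)^{q}/q!$, up to factors depending only on $q$. Forcing this to be at most $k\ell$ yields $\lambda = O\big((k\ell)^{1/q}\big)$. Plugging this into the argument of Lemma~\ref{lem:max_diistance} (the furthest word shares at most two length‑$\lambda$ subwords with its centre, so the intersection has size $O(\lambda^2)$ out of $\ell^2$) gives $\mathfrak{D}_{\mathcal{L}_{\mathbf{P}}^{\Sigma},k} \geq \frac{\ell^2}{\lambda(\lambda+1)} = \Omega\big(\ell^2/(k\ell)^{2/q}\big)$.

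Dividing the algorithm's bound $\frac{2\ell^2}{\log_q^2(k(q-1))}$ by this lower bound produces a performance ratio $O\!\left(\frac{(k\ell)^{2/q}}{\log_q^2(k(q-1))}\right)$; since $\log_q(k(q-1)) = \Theta(\log_q(kq))$ for $q\geq 2$ and $(k\ell)^{2/q} = \sqrt[q/2]{k\ell}$, this is $O\!\left(\frac{\sqrt[q/2]{k\ell}}{\log_q(kq)^2}\right)$, which is the claimed bound.

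I expect the main obstacle to be the combinatorial lower bound on the number of admissible length‑$\lambda$ necklace prefixes, and in particular making it \emph{uniform over all Parikh vectors}: the bound is driven by the most skewed $\mathbf{P}$, where there are few necklaces and few admissible prefixes, so fewer strings must be covered and a longer common subword $\lambda$ becomes attainable, giving the weakest lower bound on the optimum. Two further subtleties must be handled: (i) every word needs only \emph{one} of its $\ell$ length‑$\lambda$ windows covered, so the estimate is really a hitting‑set lower bound --- rigorously one should exhibit, given any $\leq k\ell$ length‑$\lambda$ strings, a $\mathbf{P}$-necklace all of whose length‑$\lambda$ factors avoid them whenever $\lambda$ exceeds the threshold --- rather than a plain pigeonhole count (the same looseness is present in Lemma~\ref{lem:min_diistance}); and (ii) pinning down the exact exponent, i.e. $1/q$ rather than $1/(q-1)$, which is precisely what determines the $\sqrt[q/2]{k\ell}$ form in the statement and requires the multinomial count above rather than a cruder degree‑$(q-1)$ bound.
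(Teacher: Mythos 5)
Your overall strategy coincides with the paper's: use the algorithmic guarantee of Lemma \ref{lem:fixed_content_lower_bound} as the numerator, lower-bound the optimum by noting that $k$ centres expose at most $k\ell$ length-$\lambda$ factors while the number of length-$\lambda$ factors of fixed-content necklaces grows like a degree-$q$ polynomial in $\lambda$ (the paper uses the falling factorial $\lambda!/\max(1,\lambda-q+1)!$, which it then approximates by $(\lambda-q+1)^q$, where you use a multinomial of the same order), conclude $\lambda \leq \sqrt[q]{k\ell}+q-1$, and divide. The caveats you raise (uniformity over skewed Parikh vectors, pigeonhole versus hitting-set, and the exponent $1/q$ versus $1/(q-1)$) are genuine, but the paper glosses over them in exactly the same way, so they do not separate your argument from the published one.

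There is, however, one concrete gap in your lower bound on the optimum. You assert $\mathfrak{D}_{\mathcal{L}_{\mathbf{P}}^{\Sigma},k} \geq \ell^2/(\lambda(\lambda+1))$ by bounding the intersection by $O(\lambda^2)$. But in the fixed-content language any two words already share all $\ell$ length-one subwords (and, as the paper points out at the start of its proof, also a length-two subword whenever some letter occurs more than $\ell/2$ times), so the intersection is always at least $\ell$ and the optimal distance is at most about $\ell^2/\ell = \ell$. Whenever $\lambda(\lambda+1) < \ell$, i.e. $(k\ell)^{2/q} \ll \ell$, your claimed lower bound exceeds $\ell$ and is therefore false. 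The paper keeps the intersection bound as $\ell - \lambda + \lambda(\lambda+1)$ and recovers the stated ratio only under the explicit extra hypothesis $\ell \leq \left(\sqrt[q]{k\ell}+q-1\right)^2$, which makes the $\ell$ term absorbable into the $\lambda^2$ term. Your write-up needs either that hypothesis or a substitute for it; without it the denominator of the performance ratio is $\Theta(\ell)$ rather than $\Theta\left((k\ell)^{2/q}\right)$ and the claimed factor $O\left(\frac{\sqrt[q/2]{k\ell}}{\log_q(kq)^2}\right)$ does not follow.
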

\begin{proof}
First we must establish a lower bound on the distance for this setting.
Note than, unlike for $\mathcal{L}_{\ell}^{\Sigma}$, every word contains the same set of characters.
Therefore size of the intersection in the overlap co-efficient will be at least $\ell$ for every word.
Moreover,
if any character occurs more than $\frac{\ell}{2}$ times, there will be a subword of length 2 shared by every word.

Let $\lambda$ be the length of the longest subword such that every word in $\mathcal{L}_{\mathbf{P}}^{\Sigma}$ share a subword of length at least $\lambda$ with at least one of the centres.
An upper bound on the length of $\lambda$ comes from the number of prefixes to fixed content necklaces.
For any $q \leq l \leq \ell$, there will be at least $\frac{l!}{\max(1, l - q + 1)!}$ possible subwords of length $l$.
As there are $k \cdot \ell$ possible subwords of length $\lambda$, this requires $k \cdot \ell \geq \frac{\lambda!}{\max(1,\lambda - q + 1)!}$.
assuming $\lambda \geq q$, $\frac{\lambda!}{\max(1,\lambda - q + 1)!}$ may be approximated as $(\lambda - q + 1)^q$, giving $k \cdot \ell \geq (\lambda - q + 1)^q$ bounding $\lambda$ as $\lambda \leq \sqrt[q]{k\cdot \ell} + q - 1$.

For any combination of subwords of length less than or equal to $\lambda$ guaranteeing that every word in $\mathcal{L}_{\mathbf{P}}^{\Sigma}$ has an intersection of size $ \lambda(\lambda + 1)$ requires $\lambda^2 \leq \sqrt[q]{k\cdot\ell} + q - 1$, however for any $\lambda \geq 2$, this contradicts the assumption that $\lambda$ is the largest value such that $\lambda \leq \sqrt[q]{k\cdot\ell} + q - 1$.
Therefore the size of the intersection must be less than or equal to $\ell - \lambda + (\lambda)(\lambda + 1)$, which by substituting the upper bound of $\sqrt[q]{k \cdot \ell} + q - 1$ gives $\ell - \sqrt[q]{k \cdot \ell} + q - 1 + (\sqrt[q]{k \cdot \ell} + q - 1)(\sqrt[q]{k \cdot \ell} + q)$.
This gives an lower bound on the distance as $\frac{\ell^2}{\ell - \sqrt[q]{k \cdot \ell} + q - 1 + (\sqrt[q]{k \cdot \ell} + q - 1)(\sqrt[q]{k \cdot \ell} + q)}$.

To get the approximation ratio, the lower bound in the distance given in Lemma \ref{lem:fixed_content_lower_bound} by this upper bound gives $\frac{2\ell - 2\sqrt[q]{k \cdot \ell} + 2q - 2 + 2(\sqrt[q]{k \cdot \ell} + q - 1)(\sqrt[q]{k \cdot \ell} + q) }{\log^2_q(k(q - 1))}$.
Assuming that $\ell \leq \left(\sqrt[q]{k \cdot \ell + q - 1}\right)^2$, this can be simplified for big O notation to $O\left(\frac{\sqrt[q/2]{k \cdot \ell}}{\log_q(k\cdot q)^2}\right)$.
\end{proof}

\subsection{Sampling with forbidden subwords}
\label{sec:forbidden_words}

In order to generalise the algorithm described in Theorem \ref{thm:alg_1}, the ranking and unranking functions must be generalised to account for forbidden words.
This is a much more challenging problem compared to the general case primarily due to the cyclic nature of the words.
Unlike with an non-cyclic word, when counting the number of cyclic words without a given forbidden word, it must be ensured that it does not occur for {\em any} shift, as opposed to just one.
This is further complicated when considering multiple forbidden words, where it must be checked that no forbidden word occurs for any rotation.
Ruskey and Sawada \cite{Ruskey2000} computed the size of $\mathcal{L}_\ell^{\Sigma}$ as
%
$
    N_q^\ell(F) = \sum\limits_{d | \ell} \phi(d) C_q^{\frac{\ell}{d}}(F),
    \label{eq:1d_necklaces_forbidden}
$
%
where $C_q^{\ell}(F)$ is a function for counting the number of cyclic words of length $\ell$ containing no subword in $F$.
This can be computed in polynomial time for a constant size of $|F|$.
The number of Lyndon words of length $\ell$ with not containing any subword in $F$, denoted $L_q^\ell(F)$, is given relative to the number of Necklace, using the function:
\begin{align}
    L_q^\ell(F) = \sum\limits_{d | \ell} \cdot \mu(d)N_q^{\ell/d}(F).
\end{align}
For the remainder of this section, let $\mathbf{N}_q^\ell(F)$ and $\mathbf{L}_q^{\ell}(F)$ denote the sets of necklaces and Lyndon words respectively.

%
Before introducing the function for  ranking and unranking of forbidden words, some theoretical results must be established.
Let $\mathbf{T}(w,F)$ be the set of words such that the canonical representation for each every word $v \in \mathbf{T}(w,F)$ is smaller than $w$, and no forbidden word in $F$ occurs as a subword.
Let $\langle x \rangle$ denote the canonical rotation of some word $x$, and let $f \nsubseteq w$ denote that $f$ is not a subword of $w$.
Using this notation we get $\mathbf{T}(w,F) = \{x \in \Sigma^{\ell} : \langle x \rangle < w, \forall f \in F, f \nsubseteq x\}$

Three further sets are needed for the purpose of ranking.
The first of these is the set of aperiodic words such that the smallest rotation is less than some given word $w$, denoted $\mathbf{T}'(w,F) = \{ x \in \Sigma^\ell : \langle x \rangle < w, \forall f \in F, f \nsubseteq x, x \text{ is aperiodic}\}$.
Next is the set of words on length $l \leq n$ where the smallest rotation is less than $w$, denoted $\mathbf{T}_l(w,F) = \{x \in \Sigma^{l} : \langle x \rangle < w, \forall f \in F, f \nsubseteq x\}$.
For two words $x$ and $w$ of lengths $l$ and $n$ respectively, $\langle x \rangle < w$ if and only if $\langle x^n \rangle < w^l$.
The final set is that of aperiodic words of a given length $l$ where the canonical representation is less than, denoted $\mathbf{T}'_l = \{  x \in \Sigma^l : \langle x \rangle < w, \forall f \in F, f \nsubseteq x, x \text{ is a Lyndon word}\}$.

\begin{lemma}
For every $d$ that is a factor of $l$, the size of $\mathbf{T}_l(w,F)$ is equal to $\sum\limits_{d | \ell} |\mathbf{T}_d'(w,F)|$.
\label{lem:size_of_T_F}
\end{lemma}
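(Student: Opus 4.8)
The plan is to decompose the set $\mathbf{T}_l(w,F)$ of length-$l$ words whose canonical rotation is less than $w$ according to the \emph{period} of each word. Every word $x$ of length $l$ has a unique primitive root: there is a unique divisor $d \mid l$ and a unique aperiodic (Lyndon-class) word $y$ of length $d$ such that $x = y^{l/d}$. So the first step is to observe that the map $x \mapsto (d, y)$, where $d$ is the period of $x$ and $y$ is the primitive word of length $d$ with $x = y^{l/d}$, is a bijection between $\Sigma^l$ and the disjoint union over $d \mid l$ of the aperiodic words of length $d$.

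The second step is to check that this bijection respects the two defining constraints of $\mathbf{T}_l(w,F)$, so that it restricts to a bijection between $\mathbf{T}_l(w,F)$ and $\bigsqcup_{d \mid l} \mathbf{T}'_d(w,F)$. For the ordering constraint: by the identity noted just before the lemma, $\langle x \rangle < w$ iff $\langle x^n\rangle < w^l$ for words of lengths $l$ and $n$; applied here, since $x = y^{l/d}$ we have $\langle x \rangle = \langle y \rangle$ viewed as a cyclic word, and comparing $\langle x\rangle$ to $w$ is the same as comparing $\langle y \rangle$ to $w$ via the same-length-representative rule $\langle y^{?}\rangle < w^{?}$ — so $\langle x\rangle < w$ holds exactly when the length-$d$ aperiodic word $y$ satisfies $\langle y\rangle < w$ in the sense defining $\mathbf{T}'_d$. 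For the forbidden-subword constraint: I would argue that a factor $f \in F$ occurs cyclically in $x = y^{l/d}$ if and only if it occurs cyclically in $y$ — one direction is immediate since $y$ embeds in $x$; the other direction uses that $x$ is just $y$ repeated, so any cyclic occurrence of $f$ in $x$ lies within some window that, by periodicity, already appears as a cyclic occurrence in $y$ (here one uses that $|f| \le d$, or more carefully handles the case $|f| > d$ — but if $|f|$ exceeds the period then $f$ occurring in $x$ forces structure that reduces again to an occurrence pattern in $y$; in the intended regime the forbidden words are short relative to $\ell$). Hence $x$ is $F$-free iff $y$ is $F$-free.

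Combining, $\mathbf{T}_l(w,F) = \bigsqcup_{d \mid l} \{\, y^{l/d} : y \in \mathbf{T}'_d(w,F)\,\}$, a disjoint union, and counting gives $|\mathbf{T}_l(w,F)| = \sum_{d \mid l} |\mathbf{T}'_d(w,F)|$, which is the claim.

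The main obstacle I anticipate is the forbidden-subword equivalence across the periodic repetition — specifically, making precise that a cyclic occurrence of $f$ in $y^{l/d}$ always descends to a cyclic occurrence of $f$ in $y$, including bookkeeping at the "seams" between copies and the corner case $|f| > d$. Everything else (the primitive-root bijection, the order-comparison translation via same-length representatives) is standard combinatorics on words and should go through cleanly using only facts already set up in the Preliminaries.
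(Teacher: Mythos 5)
Your proposal follows essentially the same route as the paper's proof: partition $\mathbf{T}_l(w,F)$ by period, map each periodic word to its primitive root of length $d \mid l$, and observe that the sets $\mathbf{T}'_d(w,F)$ are pairwise disjoint because they contain only aperiodic words of distinct lengths. You are in fact more careful than the paper, which does not address the preservation of the forbidden-subword condition under the primitive-root correspondence at all; the corner case you flag (a forbidden word of length exceeding the period $d$ occurring in $y^{l/d}$ but not in $y$) is a genuine subtlety that the paper silently ignores, so your version is, if anything, the more honest account of what needs to be checked.
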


\begin{proof}
Observe that every word in the set $\mathbf{T}_l(w,F)$ will either be aperiodic, in which case it will belong also to the set $\mathbf{T}_l'(w,F)$, or it will be periodic.
If it is periodic, the period must be some value that is a factor of $l$.
Given some word with a period $d$, if it is smaller than $w$, then it will occur in the set $\mathbf{T}_d'(w,F)$.
By definition, any word greater than $w$ will not occur in any set $\mathbf{T}_b'(w,F)$ for any $b$ such that $\ell \bmod{b} \equiv 0$.
As each set $\mathbf{T}_d'(w,F)$ consists only of aperiodic words, there can be no word that occurs in both $\mathbf{T}_d'(w,F)$ and $\mathbf{T}_e'(w,F)$ for $d \neq e$.
Therefore the size of $\mathbf{T}_l(w,F)$ can be computed as $|\mathbf{T}_l(w,F)| = \sum\limits_{d | l} |\mathbf{T}_d'(w,F)|$.
\end{proof}
By application of the M\"{o}bius inversion formula to $|\mathbf{T}_l(w,F)| = \sum\limits_{d | l} |\mathbf{T}_d'(w,F)|$, an equation for the size of $\mathbf{T}'_{l}(w,F)$ can be derived as:

\begin{align}
    |\mathbf{T}'_l(w,F)| = \sum\limits_{d | l} \mu\left(\frac{\ell}{d}\right)|\mathbf{T}_d(w,F)|. \label{eq:compute_T_p_F}
\end{align}

This can be used to rank some word $w$ amongst the set of Lyndon words without any forbidden subword.

\begin{lemma}
The number of Lyndon word smaller than some word $w$ without any forbidden subword in $F$ will be given by $rank_L(w,F) = \frac{1}{\ell} \cdot \mathbf{T}'(w,F)$.
\label{lem:ranking_L_F}
\end{lemma}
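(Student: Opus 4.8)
The plan is to prove the identity by exhibiting a bijection between Lyndon words of length $\ell$ avoiding $F$ that are strictly smaller than $w$, and the rotation classes of words that make up $\mathbf{T}'(w,F)$; since every aperiodic word of length $\ell$ has exactly $\ell$ pairwise-distinct cyclic rotations, this bijection multiplies cardinalities by $\ell$, giving $rank_L(w,F) = \frac{1}{\ell}|\mathbf{T}'(w,F)|$.

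First I would recall two standard structural facts about necklaces. A word of length $\ell$ is aperiodic if and only if its $\ell$ cyclic rotations are pairwise distinct, so every aperiodic $x \in \Sigma^\ell$ lies in a rotation class of size exactly $\ell$; and each such class contains exactly one Lyndon word, namely the canonical rotation $\langle x\rangle$. Consequently, for aperiodic $x$ the condition $\langle x\rangle < w$ is a property of the rotation class of $x$, not of the chosen representative, and within the class $\langle x\rangle$ is precisely the word that $rank_L$ would count.

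Next I would show that membership in $\mathbf{T}'(w,F)$ is also a property of the whole rotation class. Aperiodicity and the comparison $\langle x\rangle < w$ are already rotation-invariant by the previous paragraph, so it remains to treat the forbidden-factor condition. The key observation is that a word $f$ with $|f|\le\ell$ occurs as a cyclic factor of $x$ if and only if it occurs as an ordinary factor of at least one rotation of $x$; hence $x$ avoids every $f\in F$ exactly when every rotation of $x$ does. Therefore, for a fixed aperiodic $x$, either all $\ell$ rotations lie in $\mathbf{T}'(w,F)$ or none do, and the former occurs precisely when the Lyndon word $\langle x\rangle$ satisfies $\langle x\rangle < w$ and avoids $F$. (Here $w$ itself, if it happens to be a valid Lyndon word, is correctly excluded, since its rotations have canonical form $w\not<w$.)

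Finally I would assemble the count: $\mathbf{T}'(w,F)$ is the disjoint union, over all Lyndon words $v$ of length $\ell$ with $v<w$ that avoid every word of $F$, of the rotation class of $v$; distinct Lyndon words give disjoint classes, and each class has size exactly $\ell$. Hence $|\mathbf{T}'(w,F)| = \ell\cdot N$ where $N$ is the number of such Lyndon words, which is $rank_L(w,F)$ by definition, and rearranging yields the statement. The only genuinely delicate step is the rotation-invariance of the forbidden-subword condition — establishing the equivalence between a linear factor occurrence in some rotation and a cyclic occurrence in $x$, for which the hypothesis $|f|\le\ell$ is needed; everything else is bookkeeping over rotation classes and the two recalled facts about aperiodic words.
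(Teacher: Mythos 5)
Your proof is correct and follows essentially the same route as the paper's: both arguments count each qualifying Lyndon word exactly $\ell$ times inside $\mathbf{T}'(w,F)$ via its distinct rotations and divide by $\ell$. Your version is more careful than the paper's (which is two sentences long), in particular in spelling out that the forbidden-subword condition is rotation-invariant because occurrences are taken cyclically — a detail the paper leaves implicit.
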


\begin{proof}
As every Lyndon word is aperiodic, it has $n$ unique rotations.
Therefore, for any given word $w$, each Lyndon word will occur $n$ times within the set of aperiodic words with some rotation smaller than $w$, if and only if the canonical representation of the Lyndon word is smaller than $w$.
\end{proof}
%
%
In the next lemma we compute the number of necklaces smaller than $w$ using $rank_L(w,F)$.

\begin{lemma}
The number of necklaces smaller than $w$ without any forbidden subword in $F$ is equal to $rank_N(w,F) = \sum\limits_{d | \ell} \frac{1}{d} \cdot \mathbf{T}'_d(w, F)$.
\label{lem:ranking_N_F}
\end{lemma}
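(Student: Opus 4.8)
The plan is to count necklaces smaller than $w$ by classifying each such necklace according to its period and reducing to the Lyndon-word count $rank_L$ established in Lemma \ref{lem:ranking_L_F}, exactly mirroring how the classical necklace/Lyndon identity $N_q^\ell = \sum_{d|\ell}\frac{1}{d}L_q^d$ is proved. First I would observe that every necklace of length $\ell$ with no forbidden subword has a well-defined primitive period $d\mid\ell$, and its canonical representative is the $(\ell/d)$-fold repetition of an aperiodic word of length $d$ whose canonical rotation is itself a Lyndon word of length $d$ avoiding $F$ (avoidance is inherited since any factor of the length-$d$ block is a factor of the length-$\ell$ word, and conversely). Crucially, for a periodic word of period $d$, being smaller than $w$ as a length-$\ell$ string is equivalent to its length-$d$ ``root'' being smaller than $w$ in the sense defined in the excerpt — this is the same-length-representative comparison $\langle x\rangle < w \iff \langle x^{\ell/d}\rangle < w^{d}$ noted just before Lemma \ref{lem:size_of_T_F} for the set $\mathbf{T}_l(w,F)$.

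With that correspondence in hand, the bijection is: necklaces of length $\ell$ avoiding $F$ with canonical form smaller than $w$ and primitive period $d$ $\longleftrightarrow$ Lyndon words of length $d$ avoiding $F$ that are smaller than $w$ (under the above comparison). By Lemma \ref{lem:ranking_L_F} applied at length $d$, the number of such Lyndon words is $\frac{1}{d}|\mathbf{T}'_d(w,F)|$. Summing over all divisors $d\mid\ell$ — noting the period classes are disjoint and exhaust all necklaces counted by $rank_N(w,F)$ — yields
\[
rank_N(w,F) \;=\; \sum_{d\mid\ell} \frac{1}{d}\,|\mathbf{T}'_d(w,F)|,
\]
which is the claimed formula. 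Combined with Equation \eqref{eq:compute_T_p_F} this expresses $rank_N(w,F)$ entirely in terms of the computable quantities $|\mathbf{T}_d(w,F)|$.

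The main obstacle I anticipate is justifying the comparison-transfer step rigorously for every divisor $d$ simultaneously: one must be careful that ``$x$ periodic of period $d$ and $\langle x\rangle < w$'' is genuinely equivalent to ``the length-$d$ root $r$ satisfies $\langle r\rangle < w$'', including the boundary behaviour when $\langle x\rangle$ or the canonical form equals a prefix-type extremal case, and that the definition of $\mathbf{T}'_d(w,F)$ (aperiodic words of length $d$ whose canonical rotation is a Lyndon word smaller than $w$) matches the set being counted. A secondary point worth a sentence is confirming that $F$-avoidance really is preserved in both directions under the root/power operation, since a forbidden factor could in principle straddle the boundary between two copies of the root — but because the word is a genuine period-$d$ repetition, any such straddling occurrence already appears as a cyclic factor of the single length-$d$ block, so no avoidance information is lost. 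Everything else is the standard divisor-sum bookkeeping.
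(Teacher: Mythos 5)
Your proposal is correct and follows essentially the same route as the paper: partition the necklaces smaller than $w$ by their primitive period $d\mid\ell$, identify each period-$d$ class with the Lyndon words of length $d$ (avoiding $F$, smaller than $w$ under the same-length comparison), and apply Lemma~\ref{lem:ranking_L_F} at length $d$ to get the count $\frac{1}{d}|\mathbf{T}'_d(w,F)|$ before summing over divisors. Your write-up is in fact more careful than the paper's on the two transfer steps (the comparison of the root versus the full power, and the preservation of $F$-avoidance under the root/power operation), and it correctly produces the coefficient $\frac{1}{d}$ for each periodic class.
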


\begin{proof}
It follows from Lemma \ref{lem:size_of_T_F} that all necklaces smaller than some word will either be aperiodic, or periodic with a period that is a factor of the length of the necklace.
From Lemma \ref{lem:ranking_L_F}, the necklaces that are smaller than $w$ and are aperiodic are $\frac{1}{\ell}\cdot \mathbf{T}'_d(w, F)$.
Similarly, the necklaces with a period of some factor $d$ of $n$ are $\frac{1}{\ell}\cdot \mathbf{T}'_d(w, F)$.
\end{proof}

The problem now becomes computing $|\mathbf{T}_l(w,F)|$.
To do this, the set will be partitioned to the set $\mathbf{A}_w(t,j,F)$ such that for every word $v \in \mathbf{A}_w(t,j,F)$ the following hold.

\begin{itemize}
    \item $t$ is the smallest cyclic shift such that shifting $v$ by $t$, denoted $v \cdot t $, makes the resulting word smaller than $w$, i.e. $v \cdot t < w$.
    \item Under the shift by $t$, $j$ is the length of the longest prefix of $v \cdot t$ that is also a prefix of $j$.
\end{itemize}

\begin{lemma}
The size of $\mathbf{A}_w(t,j,F)$ can be computed in $O(q\ell^{|F| + 2})$ time.
\label{lem:ranking_A}
\end{lemma}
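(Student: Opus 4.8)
The plan is to express $|\mathbf{A}_w(t,j,F)|$ as the number of accepting runs of a finite automaton — equivalently, as the value of a layered dynamic program over the positions of a word — and then to read off the running time from the number of states and transitions. Throughout I assume $0\le t<\ell$ and $0\le j<\ell$; the cases $t=0$ (no minimality constraint) and $j=\ell$ (which forces $v\cdot t=w$, so the set is empty) are degenerate and handled directly.

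The first step is to pass to the rotated word: since $x\mapsto x\cdot t$ is a bijection, it suffices to count the words $y=v\cdot t$ satisfying the translated constraints, namely (i) $y$ avoids every forbidden factor of $F$ cyclically (rotation leaves the underlying cyclic word unchanged); (ii) $y_1\cdots y_j=w_1\cdots w_j$, $y_{j+1}<w_{j+1}$, and the remaining letters of $y$ are free, so that $y<w$ with longest common prefix exactly $j$; and (iii), unwinding $v\cdot t'=y\cdot(t'-t)$, every rotation of $y$ beginning at one of the positions $2,3,\dots,t+1$ is lexicographically $\ge w$. I then build $y$ left to right, maintaining a state that records: the factor-avoidance bookkeeping for $F$ — for each $f\in F$, the length of the longest suffix of the current prefix that is a prefix of $f$, which detects a first occurrence of $f$, the occurrences wrapping across the cyclic boundary being handled as in the count $C_q^\ell(F)$ of Ruskey and Sawada~\cite{Ruskey2000}, still within $O(\ell^{|F|})$ states; and a single $O(\ell)$-valued pointer recording how the deepest still-undecided rotation-comparison against $w$ currently stands. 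A transition on a candidate letter $\sigma\in\Sigma$ updates the state and is rejected if it would complete an occurrence of some $f\in F$ or drive a monitored rotation below $w$. The factor-avoidance part contributes $O(\ell^{|F|})$ states, the comparison pointer another $O(\ell)$, there are $O(\ell)$ positions, and each state branches over $\le q$ letters, so one sweep counts the accepting runs — hence $|\mathbf{A}_w(t,j,F)|$ — in $O(q\,\ell^{|F|+2})$ time.

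The part that needs genuine argument rather than routine calculation, and which I expect to be the main obstacle, is condition (iii): while $y$ is being built, up to $t$ rotation-comparisons against $w$ run simultaneously and overlap cyclically with the part of $y$ not yet placed, so it is not obvious that a single pointer suffices. The key observation is that, after placing $y_1\cdots y_m$, the rotation starting at position $p$ is still undecided exactly when $w_1\cdots w_{m-p+1}$ is a suffix of $y_1\cdots y_m$; by the border/failure-function structure of $w$ the lengths $m-p+1$ of the undecided comparisons form the border chain of the longest one, so the whole set of undecided comparisons, and hence the largest letter $\max_p w_{m-p+1}$ that the next symbol may not undercut, is determined by the length of the longest undecided comparison together with data precomputed from $w$. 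Carrying that single length, opening a new comparison whenever the next position lies in $\{2,\dots,t+1\}$, closing comparisons that resolve by a strict inequality, and performing an $O(\ell^2)$ final pass to settle the few comparisons that wrap past the end of $y$ into its pinned-down prefix, completes the construction; verifying that the automaton accepts precisely $\mathbf{A}_w(t,j,F)$ and that the sweep runs in the claimed $O(q\,\ell^{|F|+2})$ time is then straightforward.
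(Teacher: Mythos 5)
Your proposal is correct in substance and arrives at the same complexity accounting as the paper --- a position-by-position count whose state is $O(\ell^{|F|})$ worth of forbidden-factor bookkeeping times an $O(\ell)$-valued lexicographic pointer, swept over $O(\ell)$ positions with $q$-way branching --- but it organises the count differently. The paper does not rotate: it writes each $v \in \mathbf{A}_w(t,j,F)$ as $\beta\, w_1\cdots w_j\, x\, \rho$ with $|\beta|=t$, $x<w_{j+1}$, and each suffix of $\beta$ beginning a rotation that is not below $w$; it then splits into the cases $t+j\le\ell$ and $t+j>\ell$ and delegates the counting to a memoised recursion $B'$ whose arguments carry the prefix/suffix sets of forbidden words and whose match-length parameter plays the role of your comparison pointer. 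Your version has one genuine advantage: you isolate and actually justify the step the paper leaves implicit, namely that the simultaneously running rotation comparisons against $w$ collapse to a single pointer because the undecided match lengths form the border chain of the longest one. Where your sketch is thinnest is the wrap-around: a comparison opened at a position $p\le t+1$ that is still undecided after position $\ell$ must be settled against $y_1\cdots y_{p-1}$, and if $p-1>j+1$ and the comparison ties through the pinned prefix and through $y_{j+1}$, its outcome depends on free letters that your sweep has already summed out, so an $O(\ell^2)$ final pass over $w$ alone cannot settle it; this is precisely the situation the paper's second case ($t+j>\ell$) is built to handle, and your automaton would need extra state (or an up-front conditioning on which long border of $w$ is realised at the end of $y$) to cover it.
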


\begin{proof}
This can be done by considering two possible cases for the set.
First is the case $t + j \leq \ell$.
In this case, every word will be of the form $\beta w_1 \hdots w_j x \rho$ where:

\begin{itemize}
    \item $\beta$ is some word of length $t$ with no forbidden subword such that every suffix is greater than $w$;
    \item $w_1 \hdots w_j$ is the prefix of $w$ of length $j$;
    \item $x$ is a character smaller than $w_{j + 1}$;
    \item $\rho$ is a word with no restrictions other than having no forbidden substrings.
\end{itemize}

To compute the number of possible words satisfying $\rho$ and $\beta$, we define the function $B'_{\alpha}(l, j, t, P, S)$.
A full definition of this function is given in Appendix \ref{app:B_function}.
At a high level, this function works by recursively checking how many possible ways of extending the string based on the sets $P$ and $S$. 
$S$ represents the set of suffixes of $\alpha$ that are prefixes of $w_1 \hdots w_j$.
$P$ initially represents the set of prefixes of $\alpha$ that are suffixes of $\alpha_1 \hdots \alpha_j$.
As this function can be computed recursively, the computational complexity will be equal to the number of potential calls to $B'$.
There are $\ell$ possible value for $l$ and $t + j$, and $\ell^{|F|}$ for both $S$ and $P$.
This gives a total time complexity of This will take $O(q\ell^{|F| + 3})$ time to compute in the worst case.
Alongside this, two auxiliary functions $\theta(w_1 \hdots w_j \sigma)$ and $\Omega(w_1 \hdots w_j \sigma)$ are needed. These, respectively, compute the sets of prefixes and suffixes of forbidden words of $w_1 \hdots w_j \sigma$, for some character $\sigma$ .
Using the above, we get:
$$
|\mathbf{A}_w(t,j,F)| = \sum\limits_{\sigma = 1}^{w_{j + 1} - 1} B'_{w}(\ell - j - 1, 0, \ell - (j + t + 1), \theta(w_1 \hdots w_j \sigma), \Omega(w_1 \hdots w_j \sigma))
$$
In the second case, every word will be of the form $w_s w_{s + 1}\hdots w_{j} x \beta w_1 \hdots w_{s - 1}$.
Let $\delta$ be the length of the longest prefix of $w$ that is a suffix $w_s \hdots w_j$.
If $x < w_{\delta + 1}$, then the shift by $s - \delta$ would be smaller than $\alpha$.
Therefore $x$ must be greater than or equal to $w_{\delta + 1}$.
From this, the size of $\mathbf{A}_w(t,j,F)$ can be computed as:
%
$|\mathbf{A}_w(t,j,F)| = B'_{w}(\ell - j - 1, \delta + 1, 0, \theta(w_1 \hdots w_{\delta + 1}), \Omega(w_1 \hdots w_{\delta + 1})) +
\sum\limits_{\sigma = w_{\delta + 1} + 1}^{w_{j + 1} - 1} B'_{w}(\ell - j - 1, 0, 0, \theta(w_1 \hdots w_{\delta + 1}), \Omega(w_1 \hdots w_{\delta + 1}))$

%
As $B'_{\alpha}(l, j, t, P, S)$ can be computed in $O(q\ell^{|F| + 2})$ and stored for all value of values of $l,j,t,P$ and $S$, the time to compute the size of $\mathbf{A}_w(t,j,F)$ will be $O(q)$.
As this is dominated by the time to compute $B'$, the total time will be $O(q\ell^{|F| + 3})$.
\end{proof}

\begin{theorem}
The rank of a word amongst all necklaces without any forbidden subword may be computed in $O(q\ell^{|F| + 2}\log_2^2(\ell))$.
\label{thm:ranking_F_complexity}
\end{theorem}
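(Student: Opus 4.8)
The plan is to evaluate the ranking formula that is assembled, stage by stage, from the identities already proved, and to bound the cost of each stage against a single memoised table of $B'$ values. By Lemma~\ref{lem:ranking_N_F} we have $rank_N(w,F)=\sum_{d\mid\ell}\frac1d\,|\mathbf{T}'_d(w,F)|$, so it suffices to produce $|\mathbf{T}'_d(w,F)|$ for every divisor $d$ of $\ell$. By the M\"obius-inverted identity~\eqref{eq:compute_T_p_F} each of these is a signed sum of the quantities $|\mathbf{T}_e(w,F)|$ over $e\mid d$, and by Lemma~\ref{lem:size_of_T_F} together with the partition of $\mathbf{T}_l(w,F)$ into the classes $\mathbf{A}_w(t,j,F)$, every $|\mathbf{T}_l(w,F)|$ is the sum of $|\mathbf{A}_w(t,j,F)|$ over the admissible shift/overlap pairs $(t,j)$. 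Thus $rank_N(w,F)$ is a fixed rational combination, with coefficients depending only on $\ell$, of the numbers $|\mathbf{A}_w(t,j,F)|$, each of which is delivered by Lemma~\ref{lem:ranking_A} once the $B'$ table and the auxiliary sets $\theta(\cdot),\Omega(\cdot)$ are in place.

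For the running time I would first build, once and for all, the memoised $B'$ table that Lemma~\ref{lem:ranking_A} relies on: its entries are indexed by a length parameter, the two overlap parameters, and the prefix/suffix state sets $P,S$; since each forbidden word contributes only $O(\ell)$ possible overlap lengths there are $O(\ell^{|F|})$ reachable state sets, the remaining parameters range over $O(\ell)$ values, and each entry is computed from previously filled entries by branching over the next character in $O(q)$ amortised time, so the whole table and the sets $\theta,\Omega$ are produced within $O(q\ell^{|F|+2})$. With the table on hand, every $|\mathbf{A}_w(t,j,F)|$ — hence every $|\mathbf{T}_l(w,F)|$ and every $|\mathbf{T}'_d(w,F)|$ — is obtained by summing $O(q)$ look-ups, and the distinct arguments arising across all overlap pairs $(t,j)$ collapse onto the arguments already tabulated, triggering no new recursion. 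The one remaining multiplicative overhead is the number of terms generated by the two nested divisor sums — the outer one over $d\mid\ell$ in Lemma~\ref{lem:ranking_N_F} and the inner one over $e\mid d$ in~\eqref{eq:compute_T_p_F} — each contributing a $\log_2\ell$ factor; multiplying the per-stage cost $O(q\ell^{|F|+2})$ by the resulting $O(\log_2^2(\ell))$ gives the claimed bound.

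The step I expect to be the real obstacle is the amortisation claim in the second paragraph: one must verify that the $B'$ recursion never queries a state pair $(P,S)$ outside the $O(\ell^{|F|})$ sets of forbidden-word prefixes/suffixes, so that the state space stays polynomial rather than $q^{\Theta(\ell)}$, and that when $\mathbf{T}_l(w,F)$ is needed for a proper divisor $l<\ell$ the comparison $\langle x\rangle<w$ is correctly routed through the same-length-representative equivalence $\langle x\rangle<w\iff\langle x^{\ell}\rangle<w^{l}$ without enlarging any parameter range. A lesser subtlety is to confirm that the classes $\mathbf{A}_w(t,j,F)$ genuinely partition $\mathbf{T}_l(w,F)$ — which rests on $t$ being the \emph{smallest} shift making $v\cdot t<w$ — and that the wrap-around case $t+j>\ell$ of Lemma~\ref{lem:ranking_A}, with its overlap length $\delta$, is counted exactly once. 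Once these points are settled, combining the lemmas is routine arithmetic.
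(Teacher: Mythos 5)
Your proposal is correct and follows essentially the same route as the paper: decompose $rank_N(w,F)$ over the divisors of $\ell$ via Lemma~\ref{lem:ranking_N_F}, expand each $|\mathbf{T}'_d(w,F)|$ by the M\"obius inversion~\eqref{eq:compute_T_p_F}, reduce each $|\mathbf{T}_l(w,F)|$ to the $|\mathbf{A}_w(t,j,F)|$ computed via $B'$ in $O(q\ell^{|F|+2})$, and pick up a $\log_2^2(\ell)$ factor from the two nested divisor sums. Your observation that a single memoised $B'$ table could be shared across all divisor computations (potentially tightening the bound) is a small refinement the paper does not pursue, and the obstacles you flag concern the already-established Lemma~\ref{lem:ranking_A} rather than this theorem's argument.
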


\begin{proof}
It follows from Lemma \ref{lem:ranking_A} that by separately computing the values of $B'$, the time to compute the size of $\mathbf{A}_w(j,t,F)$ will be $O(q\ell^{|F| + 2})$.
Following Lemma \ref{lem:ranking_N_F}, the number of Necklaces may be computed by summing the size of $\mathbf{T}'_d(w,F)$ for every factor $d$ of $\ell$.
Note that there are at most $\log_2(\ell)$ factors of $\ell$.
The size of $\mathbf{T}'_d(w,F)$ can be computed using Lemma Equation \ref{eq:compute_T_p_F}.
In the worst case there will be $O(\log_2\ell)$ sets of $\mathbf{T}_d(w, F)$, each of which taking at most $O(q\ell^{|F| + 2})$ time to compute.
Putting this together, the total time complexity will be $O(q\ell^{|F| + 2}\log_2^2(\ell))$
\end{proof}

\begin{theorem}
For a constant number of forbidden words, Problem \ref{prob:k_sample} for $\mathcal{L}_{\ell}^{\Sigma \setminus F}$ can be solved in polynomial time with an approximation factor of $O(\log_k^2(\ell))$.
\end{theorem}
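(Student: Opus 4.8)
The plan is to instantiate the prefix‑tree framework of Section~\ref{sec:prefix_tree} with the forbidden‑word ranking function $rank_N(\cdot,F)$ of Theorem~\ref{thm:ranking_F_complexity} in place of the unrestricted one, and then to mirror the analysis of Theorem~\ref{prefix_tree}. Since $rank_N(\cdot,F)$ is computable in $O(q\ell^{|F|+2}\log_2^2\ell)$ time, which is polynomial for constant $|F|$, each step of the breadth‑first prefix enumeration---deciding whether $p\sigma$ extends to a necklace of $\mathcal{L}_\ell^{\Sigma\setminus F}$ by comparing $rank_N(p\sigma\, q^{\ell-|p|-1},F)$ with $rank_N(p\sigma\, 1^{\ell-|p|-1},F)$---is polynomial, and the tree uses $O\!\big(k\cdot\tfrac{k-q}{q-1}\big)$ such evaluations before it first exceeds $k$ leaves, exactly as in Lemma~\ref{thm:alg_1}. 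The one new ingredient is turning a surviving prefix $p$ into an actual centre: knowing some necklace of $\mathcal{L}_\ell^{\Sigma\setminus F}$ has prefix $p$, extend $p$ greedily, at each position keeping the smallest character for which $rank_N$ still certifies an $F$‑free necklace with that longer prefix; this costs $O(q\ell)$ rank evaluations and outputs a word of $\mathcal{L}_\ell^{\Sigma\setminus F}$ by construction. Hence the whole procedure runs in polynomial time.

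For the distance it achieves, note that forbidding subwords only deletes necklaces, so the length‑$\lambda$ prefixes of necklaces in $\mathcal{L}_\ell^{\Sigma\setminus F}$ form a subset of those for $\mathcal{L}_\ell^{\Sigma}$, whose number is at most $\sum_{i=1}^{\lambda}q^i/i$. The counting step of Lemma~\ref{thm:alg_1} therefore applies verbatim: the algorithm halts with prefix length $\lambda\geq\log_q k-1$, every word of $\mathcal{L}_\ell^{\Sigma\setminus F}$ shares a subword of that length with its nearest centre, and by Lemma~\ref{lem:max_diistance} lies at distance at most $\tfrac{2\ell^2}{(\log_q k-1)(\log_q k-2)}=O\!\big(\tfrac{\ell^2}{\log_q^2 k}\big)$.

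For the matching lower bound on $\mathfrak{D}_{\mathcal{L}_\ell^{\Sigma\setminus F},k}$ I would re‑run the argument of Lemma~\ref{lem:min_diistance}: for any $k$ centres, if $\lambda$ is the longest length for which every word shares a length‑$\lambda$ subword with some centre, then the length‑$\lambda$ prefix of the canonical form of every word occurs among the at most $k\ell$ length‑$\lambda$ subwords of the centres, so $k\ell\geq p(\lambda)$, where $p(\lambda)$ counts length‑$\lambda$ prefixes of necklaces in $\mathcal{L}_\ell^{\Sigma\setminus F}$. Bounding $\lambda$ then needs a lower estimate $p(\lambda)=\Omega(q^\lambda/\lambda)$: a union bound shows at least $q^\lambda\big(1-|F|\lambda q^{-m}\big)$ words of length $\lambda$ avoid $F$, with $m=\min_{f\in F}|f|$, so for $\lambda\le q^{m}/(2|F|)$ at least $\Omega(q^\lambda/\lambda)$ length‑$\lambda$ Lyndon words avoid $F$, and (padding each with a suitably generic $F$‑free aperiodic tail) each extends to a length‑$\ell$ necklace of $\mathcal{L}_\ell^{\Sigma\setminus F}$ and hence contributes a distinct prefix; beyond that threshold $\mathcal{L}_\ell^{\Sigma\setminus F}$ contains so few necklaces that the prefix tree enumerates them outright and the ratio is $O(1)$. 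In the main regime this gives $\lambda=O(\log_q(\ell^2 k))$, the furthest word's intersection is $O(\lambda^2)$, hence $\mathfrak{D}_{\mathcal{L}_\ell^{\Sigma\setminus F},k}=\Omega\!\big(\tfrac{\ell^2}{\log_q^2(\ell^2 k)}\big)$; dividing the two bounds gives ratio $O\!\big(\tfrac{\log_q^2(\ell^2 k)}{\log_q^2 k}\big)=O(\log_k^2\ell)$ just as in Theorem~\ref{prefix_tree}.

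The main obstacle is this last step. Unlike $\mathcal{L}_\ell^{\Sigma}$, the count $p(\lambda)$ of $F$‑free necklace prefixes has no closed form and its lower bound genuinely degrades as $\lambda$ nears the shortest forbidden word, so the proof must split into a ``generic'' regime, where the union bound recovers $p(\lambda)=\Theta(q^\lambda/\lambda)$ and Lemma~\ref{lem:min_diistance} carries over, and a ``degenerate'' regime, where $\mathcal{L}_\ell^{\Sigma\setminus F}$ is small enough to be solved exactly; the padding lemma needed to realise short $F$‑free Lyndon words as prefixes of genuine length‑$\ell$ members of $\mathcal{L}_\ell^{\Sigma\setminus F}$ is the technical heart of this. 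Everything else---the algorithm, its polynomial running time, and the upper bound on the achieved distance---transfers from Section~\ref{sec:prefix_tree} once the ranking function of Theorem~\ref{thm:ranking_F_complexity} is substituted.
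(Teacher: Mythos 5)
Your proposal follows the paper's proof essentially verbatim: the same prefix-tree construction with the forbidden-word ranking function substituted in, the same upper bound inherited from Lemma~\ref{thm:alg_1} (since forbidding subwords only shrinks the prefix set), and the same union-bound lower estimate on the number of $F$-free Lyndon words to recover the Lemma~\ref{lem:min_diistance}-style lower bound. The two difficulties you flag as the technical heart --- the degradation of the union bound when $\lambda$ approaches the length of the shortest forbidden word, and the need to realise short $F$-free prefixes as prefixes of genuine length-$\ell$ members of $\mathcal{L}_{\ell}^{\Sigma \setminus F}$ --- are real, but the paper simply asserts the bound $\frac{q^{\lambda}-\ell|F|q^{\lambda-2}}{\ell}=O\left(\frac{q^{\lambda}}{\ell}\right)$ for constant $|F|$ without addressing either, so your treatment is, if anything, more careful than the published argument.
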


\begin{proof}
The same approach given in Section \ref{sec:prefix_tree} may be used to build centres.
In this case the ranking function described in this section may be used in place of the ones used there.
Clearly the same bounds on length of the prefix will hold in the worst case, giving an upper bound on the distance of $\frac{2\ell^2}{\log_q^2(k)}$.
A lower bound follows from the same observation that the length of the longest common subword between the furthest word in the language and the nearest centre will be bound from above by the number of Lyndon words without any forbidden subwords.
A bound on this is $\frac{q^{\lambda} - (\ell|F|q^{\lambda - 2})}{\ell}$, which will be of order $O(\frac{q^\lambda}{\ell})$ for a constant size $|F|$. 
Thus the approximation factor between the bound given by this algorithm will be the same as in the unconstrained case, giving a factor of $O(\log_k^2(\ell))$.
\end{proof}

\section{Sampling via de Bruijn sequences}
The primary issue with the prenecklace based algorithm is that it does not take advantage of any additional space left in the samples.
As such a different approach will have to be considered to build the samples.
Following the same motivation of maximising the length of the subword shared between every word in the language and the nearest centre, observe that this requires every subword must occur at some point in the sample.

For a given length $\lambda$, there are $q^{\lambda}$ subwords.
A de Bruijn sequence of order $\lambda$ is a word of length $q^{\lambda}$ where every word of length $\lambda$ occurs as a subword.
This makes it a natural candidate for use as the basis for the centres.

\begin{figure}
    \centering
    0000001000011000101000111001001011001101001111010101110110111111
    
    \begin{tabular}{l|l}
        Centre & Word \\
        1 & {\color{red} 000000}1000011000{\color{blue} 10100}\\
        2 & \hspace{2.6cm}{\color{blue} 10100}01110010010{\color{darkgreen} 11001}\\
        3 & \hspace{5.2cm}{\color{darkgreen} 11001}10100111101{\color{purple}01011}\\
        4 & \hspace{7.8cm}{\color{purple}01011}0110111111{\color{red}000000}
    \end{tabular}
    \caption{Example of how to split the de Bruijn sequence of order 6 between 4 samples.
    Highlighted parts are the shared subwords between two samples.}
    \label{fig:deBrujinExample}
\end{figure}

\begin{lemma}
There exists an algorithm with a worst case running time of $O(\ell k)$ for the $k$-centre problem on $\mathcal{L}_{\ell}^{\Sigma}$ ensuring that every word in $\mathcal{L}_{\ell}^{\Sigma}$ shares a common substring of length at least $\log_q(k)$.
Further this will ensure that no word in $\mathcal{L}_{\ell}^{\Sigma}$ is a distance of more than $\frac{2\ell^2}{\log_q^2(k \ell)}$ from the nearest centre.
\label{thm:alg_2}
\end{lemma}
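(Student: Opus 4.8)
The plan is to construct the $k$ centres by taking a de Bruijn sequence of order $\lambda = \lfloor \log_q(k) \rfloor$, which has length $q^{\lambda} \leq k$, and cutting it into $k$ (roughly) equal, overlapping pieces, where consecutive pieces share an overlap of length $\lambda - 1$ (as illustrated in Figure~\ref{fig:deBrujinExample}). First I would fix $\lambda$ to be the largest integer with $q^{\lambda} \leq k$, so that $\lambda \geq \log_q(k) - 1$ and in fact $\lambda = \lfloor \log_q k\rfloor$; treating the de Bruijn sequence as a cyclic word of length $q^\lambda$, I would observe that every length-$\lambda$ word over $\Sigma$ occurs as one of its $q^\lambda$ cyclic factors. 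Then I would distribute these $q^\lambda$ starting positions among the $k$ centres: each centre is a necklace of length $\ell$ whose same-length representative contains a contiguous block of the de Bruijn sequence long enough that, together, the $k$ centres cover all $q^\lambda$ length-$\lambda$ factors. Since $\ell \geq \lambda$ (otherwise the problem is degenerate and every word trivially shares its whole length), each centre can absorb at least $\ell - \lambda + 1 \geq 1$ consecutive de Bruijn positions, and $k(\ell-\lambda+1) \geq k \geq q^\lambda$, so the covering is feasible; the centres can be padded arbitrarily (e.g. periodically) to reach length exactly $\ell$ while remaining valid necklaces in $\mathcal{L}_\ell^\Sigma$.

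Next I would argue correctness of the shared-subword guarantee: take any word $w \in \mathcal{L}_\ell^\Sigma$ and any length-$\lambda$ factor $u$ of $w$ (which exists since $\ell \geq \lambda$). By construction $u$ occurs in the de Bruijn sequence, hence within some centre's block, so $w$ shares the substring $u$ of length $\lambda \geq \log_q(k)$ with that centre. This establishes the first claim. For the distance bound, I would apply Lemma~\ref{lem:max_diistance}: sharing a subword of length $\lambda$ yields an intersection of at least $\frac{\lambda(\lambda+1)}{2}$ in the multiset of subwords, giving a distance of at most $\frac{2\ell^2}{\lambda(\lambda+1)}$; substituting $\lambda \geq \log_q(k) - 1 $, or more carefully noting $\lambda(\lambda+1) \geq \log_q^2(k)$ is the clean form one wants, and absorbing the $\ell$ factor into the stated $\frac{2\ell^2}{\log_q^2(k\ell)}$ bound (which is weaker and hence valid since $\log_q^2(k\ell) \geq \log_q^2(k) \geq \lambda(\lambda+1)$ up to the rounding), completes the distance estimate.

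Finally I would bound the running time. A de Bruijn sequence of order $\lambda$ over an alphabet of size $q$ can be generated in time linear in its length $q^\lambda \leq k$ (e.g. via the standard prefer-largest/Lyndon-word concatenation construction of Fredricksen--Kessler--Maiorana, which runs in amortised constant time per symbol). Cutting it into $k$ overlapping blocks and padding each to length $\ell$ takes $O(k\ell)$ time total. Hence the overall running time is $O(q^\lambda + k\ell) = O(k\ell)$, matching the claim.

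The main obstacle I anticipate is the bookkeeping around the overlaps and the rounding of $\lambda$: one must verify carefully that splitting $q^\lambda$ cyclic positions into $k$ blocks of length $\ell$ with overlaps of $\lambda-1$ actually covers \emph{every} length-$\lambda$ factor (not just those fully inside a single block), which forces each block boundary to duplicate the last $\lambda-1$ symbols of the previous block — exactly the highlighted regions in Figure~\ref{fig:deBrujinExample} — and that the resulting per-block payload $\ell - (\lambda - 1)$ new symbols, summed over $k$ blocks, is at least $q^\lambda$. Checking this inequality, $k(\ell - \lambda + 1) \geq q^\lambda$, together with the degenerate-case handling when $\ell < \lambda$ or $k \geq q^\ell$, is the one genuinely fiddly point; everything else follows mechanically from Lemma~\ref{lem:max_diistance} and the defining property of de Bruijn sequences.
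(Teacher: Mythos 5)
Your construction is the right one in outline (de Bruijn sequence cut into $k$ overlapping blocks with overlap $\lambda-1$), but you have chosen the order $\lambda$ too small, and this breaks the second claim of the lemma. You set $\lambda=\lfloor\log_q k\rfloor$ so that $q^{\lambda}\leq k$, i.e.\ you only ask that the number of length-$\lambda$ factors be at most the number of centres. But each centre has length $\ell$ and can host $\ell-\lambda+1$ distinct windows of length $\lambda$, so the feasibility condition is $q^{\lambda}\leq k(\ell-\lambda+1)$, which permits $\lambda$ up to roughly $\log_q(k\ell)$ — a factor you yourself notice (``$k(\ell-\lambda+1)\geq k\geq q^{\lambda}$'') but then do not exploit. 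With your smaller $\lambda$, Lemma~\ref{lem:max_diistance} only gives a distance bound of $\frac{2\ell^2}{\log_q(k)(\log_q(k)+1)}$, and your attempt to pass from this to the stated $\frac{2\ell^2}{\log_q^2(k\ell)}$ inverts the inequality: since $\log_q^2(k\ell)\geq\log_q^2(k)$, the quantity $\frac{2\ell^2}{\log_q^2(k\ell)}$ is \emph{smaller}, hence a \emph{stronger} claim, not a weaker one that your bound would absorb. For large $\ell$ the gap between the two is unbounded.

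The repair is exactly what the paper does: take $\lambda$ to be the largest integer with $q^{\lambda}\leq k(\ell-\lambda+1)$, generate the de Bruijn sequence of that order (still of length $O(k\ell)$, so the running time analysis survives unchanged), and let the $i$-th centre be the length-$\ell$ window starting at position $i(\ell-\lambda)$; then $\lambda$ is within rounding of $\log_q(k\ell)$ and both the subword guarantee (which only needs the weaker $\log_q(k)$) and the distance bound $\frac{2\ell^2}{\log_q^2(k\ell)}$ follow from Lemma~\ref{lem:max_diistance}. Your covering argument, overlap bookkeeping, and complexity analysis all carry over verbatim once $\lambda$ is enlarged; the only step you must redo is the algebra bounding $\lambda$ from the inequality $q^{\lambda}\leq k(\ell-\lambda+1)$.
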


\begin{proof}
The main idea of this algorithm is to take the de Bruijn sequence of order $\lambda$ and divide it between samples while ensuring that all subwords of length $\lambda$ occur at some point as a subword of a sample.
Note that the length of the de Bruijn sequence of order $\lambda$ will be $q^{\lambda}$.
The de Bruijn sequence may be efficiently generated in time linear to the length of the sequence \cite{Ruskey2000}, which must be less than $\ell \times k$.

Naively splitting the sequence between the $k$ centres may lead to subwords being lost.
In order to account for this, the sequence may be split into samples of size $\ell - \lambda + 1$.
The first centre can be generated by taking the first $\ell$ characters of the de Bruijn sequence.
To ensure that every subword of length $\lambda$ occurs, the fist $\lambda - 1$ characters of the second centre will be the same as the the last $\lambda - 1$ character of the first centre.
Repeating this, the $i^{th}$ centre will be the subword of length $\ell$ starting at position $i(\ell -\lambda)$ in the de Bruijn sequence.
An example of this is given in Figure 
 \ref{fig:deBrujinExample}.

To determine the length of $\lambda$ relative to $k$ and $\ell$, note that the size of the corresponding de Bruijn sequence must be small enough such that every subword may occur.
Formally, for $\lambda$ to be feasible, $q^{\lambda} \leq k(\ell - \lambda + 1)$.
This may be rearranged in terms of $\lambda$, giving as an upper bound $ \leq \log_q(k\ell)$.

Using these, the centres may be made be formed by taking each of these samples, and appending the first $\lambda - 1$ of the next sample.
For a given $k$ centres of length at most $\ell$, $\lambda$ will be the largest value such that $q^{\lambda} \leq k(\ell - \lambda + 1)$, which may be rewritten as $\lambda \leq \log_q(k (\ell - \lambda + 1))$<$\lambda \leq \log_q(k\ell)$.
%
An upper bound on the distance may be gained using Lemma \ref{lem:max_diistance}, giving $\frac{2\ell^2}{(\log_q(k\ell))(\log_q(k\ell) + 1)}$, which may be 
bounded by
$\frac{2\ell^2}{\ceil{ \log_q^2(k\ell})}$.

In terms of complexity, there are known algorithms to output the de Bruijn sequence in $O(q^{\lambda})$ time, which by the definition of $\lambda$ will be $O(k\ell)$ time in the worst case.
From the sequence, the process of dividing into $k$ samples will take no more than time linear to the size of the sequence, giving a total complexity of $O(k\ell)$.
It is worth noting that any algorithm that outputs the centres will have a complexity of at least $O(k \ell)$.
\end{proof}


\begin{theorem}
The algorithm described in Lemma \ref{thm:alg_2} will approximate the optimal solution within a factor of $8$.
\end{theorem}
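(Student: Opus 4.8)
The plan is to read off the approximation factor directly from the two bounds already established: the upper bound on the distance achieved by the de Bruijn construction and the universal lower bound on the optimum.

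First I would recall from Lemma~\ref{thm:alg_2} that the centres produced by splitting the order-$\lambda$ de Bruijn sequence leave every word of $\mathcal{L}_\ell^\Sigma$ within overlap distance at most
$U := \dfrac{2\ell^2}{\log_q(k\ell)\bigl(\log_q(k\ell)+1\bigr)}$
of some centre, while Lemma~\ref{lem:min_diistance} guarantees that no choice of $k$ centres can achieve distance smaller than $L := \dfrac{\ell^2}{\log_q(\ell^2 k)\bigl(\log_q(\ell^2 k)+1\bigr)}$. Hence the performance ratio of the algorithm is at most $U/L$, and it remains to show $U/L \le 8$.

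Writing $a := \log_q(k\ell)$ and $b := \log_q(\ell^2 k)$ (both positive in the only non-degenerate regime, see below), I would observe that $2a = \log_q(k^2\ell^2) \ge \log_q(\ell^2 k) = b$ precisely because $k \ge 1$. Since $x \mapsto x(x+1)$ is increasing for $x>0$, this gives $b(b+1) \le 2a(2a+1)$, and therefore
\[
\frac{U}{L} \;=\; \frac{2\,b(b+1)}{a(a+1)} \;\le\; \frac{4a(2a+1)}{a(a+1)} \;=\; \frac{4(2a+1)}{a+1} \;=\; 8 - \frac{4}{a+1} \;<\; 8 .
\]
This is the whole argument; the computation is routine.

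The one point that needs care — and the only thing resembling an obstacle — is the boundary behaviour of the logarithms: the chain above presumes $a = \log_q(k\ell) > 0$, i.e. $k\ell \ge q$, which is exactly the condition under which the de Bruijn construction admits a feasible order $\lambda \ge 1$ (recall $q^\lambda \le k(\ell-\lambda+1)$). In the complementary degenerate case $q > k\ell$ the construction is essentially vacuous and should be dispatched separately at the start: there a single centre already covers the language with a bounded (indeed $O(1)$) distance, so the factor-$8$ claim holds trivially. Apart from isolating that edge case, there is no substantive difficulty.
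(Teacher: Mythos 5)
Your proof is correct and follows the same strategy as the paper's: divide the upper bound achieved by the de Bruijn construction (Lemma~\ref{thm:alg_2}) by the universal lower bound of Lemma~\ref{lem:min_diistance} and bound the resulting ratio by $8$. The only difference is in the algebra, and there your version is in fact the more careful one: the paper first replaces the lower bound $\ell^2/\bigl(b(b+1)\bigr)$ by the \emph{larger} quantity $\ell^2/b^2$ before dividing, which is a weakening in the wrong direction for a lower bound, whereas you keep both bounds in their original $x(x+1)$ form, use $b=\log_q(k\ell^2)\le 2\log_q(k\ell)=2a$ together with the monotonicity of $x\mapsto x(x+1)$, and arrive at the clean estimate $U/L\le 8-\tfrac{4}{a+1}<8$. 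One small quibble on your edge-case remark: $a=\log_q(k\ell)>0$ holds whenever $k\ell>1$, not only when $k\ell\ge q$; the condition $k\ell\ge q$ (i.e.\ $a\ge 1$) is what is needed for a feasible order $\lambda\ge 1$, and you are right that the degenerate regime below it should be dispatched separately, something the paper does not address at all.
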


\begin{proof}
Recall from Lemma \ref{lem:max_diistance} that the minimum distance the word that is furthest from the sample can be is $\frac{\ell^2}{\log^2_q(k\ell^2)}$.
The upper bound on distance given by Lemma \ref{thm:alg_2} is $\frac{2\ell^2}{\log_q^2(k\ell)}$.
Dividing this upper bound by the lower bound gives $\frac{2\log^2_q(k\ell^2)}{\log_q^2(k\ell)}$.
Rewriting this in terms of base $k$ gives $\frac{2\log_k^2(k\ell^2)}{\log_k^2(k\ell)}$
= $2(\frac{\log_k(k\ell^2)}{\log_k(k\ell)})^2$
= $2(\frac{1+\log_k(\ell^2)}{1+\log_k(\ell)})^2$
= $2(\frac{1+2\log_k(\ell)}{1+\log_k(\ell)})^2$
= $2(\frac{1+2\log_k(\ell)+1-1}{1+\log_k(\ell)})^2$=\\
= $2(2-\frac{1}{1+\log_k(\ell)})^2$.
If $(2 - \frac{1}{1 + \log_k(\ell)})^2 \leq 4$ then this algorithm will approximate the optimal solution within a factor of $8$. 
As $\frac{1}{1 + \log_k(\ell)}$ will be no more than 1, this will hold.
%
%
\end{proof}

\bibliography{bib.bib}
\bibliographystyle{plainurl}

\newpage
\appendix
\section{Ranking functions}
\label{app:B_function}

The function $B'_{\alpha}(l,j,t,P,S)$ can be thought of as counting the size of the set of words where, for every $v$ in the set:
\begin{itemize}
    \item $v$ has length $t$.
    \item For every $p \in P$, and $s\in S$, no forbidden subword occurs in the word $p v s$.
    \item The first $j$ characters of $v = w_1 w_2 \hdots w_j$.
    \item Every suffix of the subword $v_{t - l} v_{t - l  + 1} \hdots v_t$ is greater than $w$.
\end{itemize}

To compute $B'_{\alpha}(l,j,t,P,S)$, two auxiliary functions will be introduced to compute the $P$ and $S$ after the next character $\sigma$ is introduced.
$\theta(P, \sigma)$ will take as argument the set of prefixes $P$ and some character $\sigma$, and returning the set of prefixes of $F$ where either $\sigma$ is the first character of a forbidden word or there exists some $p \in P$ where $p \sigma \subseteq F$ - i.e. the prefixes that may be continued by adding this character.
$\Omega(S, \sigma)$ will return the members of $S$ that are suffixes of some forbidden word, and the words $s \sigma$ for $s \in S$ where $s \sigma$ is a subword of a forbidden word.
To compute $B'_{\alpha}(l,j,t,P,S)$, the set may be further partitioned by the next character $\sigma$.
If $l > 0$, then there is no lower bound on the value of $\sigma$, otherwise $\sigma$ must be greater than or equal to ${\alpha}_{j + 1}$.
For each $\sigma$ the number of words for which the $(j + 1)^{th}$ character equals $\sigma$ will be as follows:

\begin{itemize}
    \item If there exists some $p \in P$ such that $p \sigma = f$ for some forbidden word $f$ then there will be no subsequent words.
    \item Otherwise, if $t - j = 1$ then if there exists some $p \in P$ and some $s \in S$ where $p \sigma s = f$ or $\sigma s = f$ then there will also be no words in this set, otherwise there will be only 1 word.
    \item If $l > 0$ and $t - j > 1$ then the size of the set will be equal to $B'_{{\alpha}}(l - 1, t - t,j, \theta(P, \sigma), \Omega(S,\sigma))$.
    \item If $l = 0$ and $\sigma > {\alpha}_{j + 1}$ then the size of the set will be equal to the size of the set $B'_{\alpha}(0, t - 1, 0, \theta(P, \sigma), \Omega(S, \sigma))$.
    \item Otherwise, $l = 0$ and $\sigma = {\alpha}_{j + 1}$, therefore the size of this set will be equivalent to the set $B'_{\alpha}(0, t, j + 1, \theta(P, \sigma), \Omega(S, \sigma))$.
\end{itemize}

Therefore $B'_{\alpha}(l,t,j,P,S)$ may be computed recursively as follows: $B'_{\alpha}(l,t,j,P,S)=$

\[=
\begin{cases}
    0 & t = j > 0\\
    1 & t = j = 0\\
    \sum\limits_{\sigma = 1}^{q} \begin{cases}
        B'_\alpha(l - 1, t - 1, 0, \theta(P, \sigma), \Omega(S, \sigma)) & \nexists p \in P, f \in F \text{ s.t. } p \sigma = f\\
        0 & Otherwise.
    \end{cases} & l > 0, t > 1\\
    \left(\sum\limits_{\sigma = \alpha_{j} + 1}^{q} \begin{cases}
        B'_\alpha(l - 1, t - 1, 0, \theta(P, \sigma), \Omega(S, \sigma)) & \nexists p \in P, f \in F \text{ s.t. } p \sigma = f\\
        0 & Otherwise.
    \end{cases}\right)\\
    + \left(\begin{cases}
        B'_{\alpha}(0, t, j + 1, \theta(P, \alpha_{j + 1}), \Omega(S, \alpha_{j + 1})) & \nexists p \in P, f \in F \text{ s.t. } p \sigma = f\\
        0 & Otherwise.
    \end{cases}\right) & l = 0, t - j > 1\\
    \sum\limits_{\sigma = 1}^{q} \begin{cases}
        1 & \nexists p \in P \cup \{\emptyset\}, f \in F, s \in S \cup \{\emptyset\} \text{ s.t. } p \sigma s = f\\
        0 & Otherwise.
    \end{cases} & l > 0, t = 1\\
    \left(\sum\limits_{\sigma = \alpha_{j + 1} + 1} \begin{cases}
        1 & \nexists p \in P \cup \{\emptyset\}, f \in F, s \in S \cup \{\emptyset\} \text{ s.t. } p \sigma s = f\\
        0 & Otherwise.
    \end{cases} \right)\\
    + \left(\begin{cases}
        1 & \nexists p \in P \cup \{\emptyset\}, f \in F, s \in S \cup \{\emptyset\} \text{ s.t. } p \alpha_{j + 1} s = f\\
        0 & Otherwise.
    \end{cases}\right)& Otherwise. 
\end{cases}
\]

Note that there are at most $\ell$ possible values for $l,t$ and $j$, and $\ell^{|F|}$ possible values for $P$ and $S$.
In the worst case each of these must be computed.
Assuming for some given arguments that the values of $B'_{\alpha}$ for each character has been computed, then it will take $O(q)$ time to compute the value of $B'_{\alpha}$ for these arguments.

\section{Prefix algorithm}

\begin{algorithm}
\caption{Prefix algorithm}
\begin{algorithmic}[1]
\Procedure{Prefix}{$\Sigma, n, k$}
\State Set of current prefixes, $P$ 
\State Set of new prefixes, $P'$
\State Prefix in $P$, $p$
\State Character in $\Sigma$, $\beta$
\State $P \gets \emptyset$
\State $P' \gets \emptyset$
\While{$|P| < N_k$}
    \State $P' \gets \emptyset$
    \For{$p \in P$}
        \For{$\beta \in \Sigma$}
            \If{$rank(\min(p\beta)) \neq rank(\max(p\beta))$}
                \State $P' \gets P \cup \{p\beta\}$
            \EndIf
            \If{$|P \cup P'| = k$ and $\beta \neq |\Sigma |$}
                \State \textbf{return} $P \cup P'$
            \EndIf
        \EndFor
        \State $P \gets P \setminus p$
        \If{$|P \cup P'| = k$}
            \State \textbf{return} $P \cup P'$
        \EndIf
    \EndFor
    \State $P \gets P'$
\EndWhile
\State $Prefixes \gets \emptyset$
\For{$p \in P'$}
    \State $Prefixes \gets Prefixes \cup \{unrank(\min(p) + \floor{\frac{\min(p) + \max(p)}{2}})\}$
\EndFor
\State \textbf{return} $P$
\EndProcedure
\end{algorithmic}
\label{alg:prefix_1d}
\end{algorithm}

\end{document}